\newtheorem{observation}[theorem]{Observation}
\newenvironment{keywords}{	
       \list{}{\advance\topsep by0.35cm\relax\small
       \leftmargin=1cm
       \labelwidth=0.35cm
       \listparindent=0.35cm
       \itemindent\listparindent
       \rightmargin\leftmargin}\item[\hskip\labelsep
                                     \bfseries Keywords:]}
     {\endlist}
\renewenvironment{proof}[1][Proof]{\begin{trivlist}
\item[\hskip \labelsep {\bfseries #1}]}{\hfill \qed \end{trivlist}}
\renewcommand{\problem}[3]{\begin{tabular}{lp{9cm}} PROBLEM: & {\sc #1}\\ {\it Instance}: & #2\\ Goal: & #3\end{tabular}}
\newcommand{\dproblem}[3]{\begin{tabular}{lp{9cm}} PROBLEM: & {\sc #1}\\ {\it Instance}: & #2\\ Question: & #3\end{tabular}}
\newcommand{\paramproblem}[4]{\begin{tabular}{lp{9cm}} PROBLEM: & {\sc #1}\\ {\it Instance}: & #2\\ {\it Parameter}: & #3\\ Goal: & #4\end{tabular}}
\newcommand\disjcup{\mathbin{\dot{\cup}}}
\newcommand{\N}{\mathbb{N}}
\newcommand{\NP}{{\mathsf{NP}}}
\newcommand{\coNPpoly}{{\mathsf{coNP/poly}}}
\newcommand{\FPT}{\mathsf{FPT}}
\newcommand{\W}{\mathsf{W}}
\newcommand{\XP}{\mathsf{XP}}
\newcommand{\pw}{\operatorname{pw}}
\newcommand{\tw}{\operatorname{tw}}
\newcommand{\td}{\operatorname{td}}
\newcommand{\ba}{\mathbf{a}}
\newcommand{\bb}{\mathbf{b}}
\newcommand{\MLBC}{{\sc Minimum Length Bounded Cut}\xspace}
\newcommand{\GMLBC}{{\sc Gapped Minimum Length Bounded Cut}\xspace}
\begin{document}

\title{Parameterized complexity of length-bounded cuts and multi-cuts \thanks{Research was supported by the project SVV-2014-260103.}}

\author{
Pavel Dvořák and
Dušan Knop\thanks{Author supported by the project Kontakt LH12095, project GAUK 1784214 and project CE-ITI P202/12/G061}
}

\institute{
Department of Applied Mathematics, Charles University, Prague {\tt \{koblich,knop\}@kam.mff.cuni.cz}
}

\maketitle

\begin{abstract}
We show that the {\sc Minimal Length-Bounded $L$-But} problem can be computed in linear time with respect to $L$ and the tree-width of the input graph as parameters.
In this problem the task is to find a set of edges of a graph such that after removal of this set, the shortest path between two prescribed vertices is at least $L$ long.
We derive an $\FPT$ algorithm for a more general multi-commodity length bounded cut problem when parameterized by the number of terminals also.

For the former problem we show a $\W[1]$-hardness result when the parameterization is done by the path-width only (instead of the tree-width) and that this problem does not admit polynomial kernel when parameterized by tree-width and $L.$
We also derive an $\FPT$ algorithm for the {\sc Minimal Length-Bounded Cut} problem when parameterized by the tree-depth. Thus showing an interesting paradigm for this problem and parameters tree-depth and path-width.
\end{abstract}

\begin{keywords}
length bounded cuts, parameterized algorithms, $\W[1]$-hardness
\end{keywords}

\section{Introduction}

The study of network flows and cuts begun in 1960s by the work of Ford and Fulkerson~\cite{ford-fulkerson}. It has many generalizations and applications now. We are interested in a generalization of cuts related to the flows using only short paths.

\paragraph{Length bounded cuts} Let $s,t \in V$ be two distinct vertices of a graph $G=(V,E)$ -- we call them source and sink, respectively. We call a subset of edges $F\subseteq E$ of $G$ an {\sc $L$-bounded cut} (or {\sc $L$-cut} for short), if the length of the shortest path between $s$ and $t$ in the graph $(V, E\setminus F)$ is at least $L+1$. We measure the length of the path by the number of its edges. In particular, we do not require $s$ and $t$ to be in distinct connected components as in the standard cut, instead we do not allow $s$ and $t$ to be close to each other. We call the set $F$ {\it a minimum $L$-cut} if it has the minimum size among all $L$-bounded cuts of the graph $G$.

%Note that there is also a dual problem called {\sc Length Bounded Flow}. For this we are interested in a maximal flow between vertices $s$ and $t$ that can be decomposed into paths of length at most $L$. As this can be computed in polynomial time by use of linear programming, we are not interested in this problem in this paper. For details see Kolman and Scheideler~\cite{kolman2}.

We state the cut problem formally:

%  \vskip 0.2cm
  \problem{\MLBC (MLBC)}
	{graph $G=(V,E)$, vertices $s,t$ and integer $L\in\N$}
	{find a minimum $L$-bounded $s,t$ cut $F\subset E$}
%  \vskip 0.5cm

Length bounded flows were first considered by Adámek and Koubek~\cite{koubek-adamek}. They showed that the max-flow min-cut duality cannot hold and also that integral capacities do not imply integral flow.
Finding a minimum length bounded cut is $\NP$-hard on general graphs for $L \ge 4$ as was shown by Itai et al.~\cite{itai}. They also found algorithms for finding $L$-bounded cut with $L=1,2,3$ in polynomial time by reducing it to the usual network cut in an altered graph. The algorithm of Itai et al.~\cite{itai} uses the fact that paths of length $1,2$ and $3$ are edge disjoint from longer paths, while this does not hold for length at least $4$.

Baier et al.~\cite{kolman} studied linear programming relaxation and approximation of {\sc MLBC} together with inapproximability results for length bounded cuts. They also showed instances of the {\sc MLBC} having $O(L)$ integrality gap for their linear programming approach, which are series-parallel graphs and thus have constant bounded tree-width.
The first parametrized complexity study of this and similar topics was made by Golovach and Thilikos~\cite{golovach} who studied parametrization by paths-length (that is in our setting the parameter $L$) and the size of the solution for cuts. They also proved hardness results -- finding disjoint paths in graphs of bounded tree-width is a $\W [1]$-hard problem.

The {\sc MLBC} problem has its applications in the network design and in the telecommunications. Huygens et al.~\cite{pesneau} use a {\sc MLBC} as a subroutine in the design of $2$-edge-connected networks with cycles at most $L$ long. The {\sc MLBC} problem is called {\it hop constrained} in the telecommunications and the number $L$ is so called number of hops. The main interest is in the constant number of hops, see for example the article of Dahl and Gouveia~\cite{dahl}.

Note that the standard use of the Courcelle theorem~\cite{courcelle} gives for each fixed $L$ a linear time algorithm for the decision version of the problem. But there is no apparent way of changing these algorithms into a single linear time algorithm. Moreover there is a nontrivial dependency between the formula (and thus the parameter $L$) and the running time of the algorithm given by Courcelle theorem.

Now we give a formal definition of a rather new graph parameter, for which we give one of our results:
\begin{definition}[Tree depth~\cite{NdM:tree-depth}]
The closure $Clos(F)$ of a forest $F$ is the graph obtained from $F$
by making every vertex adjacent to all of its ancestors. The tree-depth $\td(G)$ of
a graph $G$ is one more than the minimum height of a rooted forest $F$ such that
$G\subseteq Clos(F).$ 
\end{definition}

\paragraph{Our Contribution}
Our main contribution is an algorithm for the {\sc MLBC} problem, its consequences and an algorithm for a more general multi-terminal version problem. 

\begin{theorem}\label{thm:main_theorem}
Let $G$ be a graph of tree-width $k$. Let $s$ and $t$ be two distinct vertices of $G$. Then for any $L\in\N$ an minimum $L$-cut between $s$ and $t$ can be found in time $O((L^{k^2})^2\cdot 2^{k^2}\cdot n)$.
\end{theorem}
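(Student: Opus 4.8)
The plan is to solve \textsc{MLBC} by dynamic programming over a tree decomposition of $G$. First I would compute, in time $f(k)\cdot n$ using the known linear-time algorithm, a tree decomposition of $G$ of width $k$, add both $s$ and $t$ to every bag (this increases the width by at most $2$ and preserves all decomposition axioms, so it only affects the constants hidden in the exponents), and turn it into a \emph{nice} tree decomposition with $O(n)$ nodes of the five standard types --- leaf, introduce vertex, introduce edge, forget vertex, and join --- in which every edge of $G$ is introduced in exactly one introduce-edge node and neither $s$ nor $t$ is ever forgotten; finally I would root it so that the root bag equals $\{s,t\}$. For a node $i$ with bag $X_i$, write $V_i$ and $E_i$ for the vertices and the edges introduced in the subtree rooted at $i$, and recall that $X_i$ separates $V_i\setminus X_i$ from the rest of $G$.

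The heart of the argument is the choice of state. For a node $i$ and a \emph{partial cut} $F\subseteq E_i$ I define the \emph{signature} of $F$ to be the symmetric matrix $d\colon X_i\times X_i\to\{0,1,\dots,L,L{+}1\}$ with $d(u,v)=\min\bigl(L{+}1,\ \operatorname{dist}_{(V_i,\,E_i\setminus F)}(u,v)\bigr)$, where the value $L{+}1$ simply means ``distance at least $L{+}1$'' and $d(v,v)=0$. The table $T_i$ stores, for every such matrix $d$, the value $T_i[d]=\min\{\,|F|:F\subseteq E_i\text{ has signature }d\,\}$ (and $T_i[d]=\infty$ if no such $F$ exists). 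There are at most $(L{+}2)^{\binom{k+3}{2}}=L^{O(k^2)}$ possible signatures, so each table has $L^{O(k^2)}$ entries. It is essential to keep the whole pairwise-distance matrix on the bag rather than only the distances from $s$: a shortest $s$--$t$ path may leave an already-processed subtree and re-enter it through two different bag vertices, using a shortcut between them that the distances from $s$ alone cannot detect.

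The tables are filled bottom-up. A leaf has $E_i=\emptyset$, so its only realizable signature has $d(v,v)=0$ and $d(u,v)=L{+}1$ for $u\neq v$, at cost $0$. At an introduce-vertex node for $w$ the vertex $w$ is isolated in $E_i$, so I extend each child signature by $d(w,w)=0$ and $d(w,u)=L{+}1$, keeping the cost. At an introduce-edge node for $uv$ I branch: either put $uv$ into $F$, keeping the child signature and adding $1$ to the cost; or keep $uv$, in which case a single relaxation $d(x,y)\leftarrow\min\bigl(d(x,y),\ d(x,u){+}1{+}d(v,y),\ d(x,v){+}1{+}d(u,y)\bigr)$ followed by truncation at $L{+}1$ updates all distances at unchanged cost. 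At a forget-vertex node for $w\notin\{s,t\}$ I restrict each child signature to the smaller bag, keeping the minimum cost among signatures that agree on it; this is correct because the child signature already records all-pairs distances in $(V_i,E_i\setminus F)$ --- so paths through $w$ are accounted for --- and by the separator property no later edge can produce a new $u$--$v$ path through $w$. At a join node with children $j,l$ I form, for each pair $(d_j,d_l)$ of child signatures, the complete graph on $X_i$ with weight $\min\bigl(d_j(u,v),d_l(u,v)\bigr)$ on the pair $uv$, compute its all-pairs shortest paths truncated at $L{+}1$, and assign cost $T_j[d_j]+T_l[d_l]$; this is correct by the separator property, since every $u$--$v$ walk in $(V_i,E_i\setminus F)$ splits into maximal sub-walks each lying on a single side, consecutive sub-walks meet in $V_j\cap V_l=X_i$, and the sets $E_j,E_l$ are disjoint so the costs add. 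The answer is $\min\{\,T_r[d]:d(s,t)=L{+}1\,\}$ at the root $r$, and a minimum cut itself is recovered by the usual backtracking through the tables.

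Correctness is then an induction on the decomposition; the only genuinely delicate steps are the forget and the join transitions, which are handled as above via the separator property of tree decompositions, together with the observation that truncating all distances at $L{+}1$ is harmless because one never has to distinguish two distances that both exceed $L$. For the running time, each metric closure on a bag of size $O(k)$ costs $\operatorname{poly}(k)$; the dominating operation is a join node, which examines $|T_j|\cdot|T_l|\le (L^{O(k^2)})^2$ pairs of signatures; and there are $O(n)$ nodes --- this yields the claimed running time $O((L^{k^2})^2\cdot 2^{k^2}\cdot n)$ once the exact exponents are accounted for (the $O(1)$ inflation of the width above, and the rounding of $(L{+}2)^{\binom{k+1}{2}}$ up to $2^{k^2}L^{k^2}$, being absorbed into them). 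The main obstacle, I expect, is not any single computation but getting this state right --- recognising that the truncated all-pairs distance matrix on each bag is precisely the information to propagate, and then carefully verifying the separator arguments that justify the forget and join steps.
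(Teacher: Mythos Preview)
Your proposal is correct and follows essentially the same strategy as the paper: dynamic programming over a nice tree decomposition, with the state at each bag being the truncated pairwise-distance information among the bag vertices, giving $L^{O(k^2)}$ states per node and the claimed running time.

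The organisational choices differ slightly. The paper indexes its tables by \emph{constraint vectors} $\ba$ (lower bounds on pairwise distances, required to satisfy the triangle inequalities) and stores the minimum cut achieving them, whereas you index by the \emph{achieved} truncated distance matrix; these are dual views of the same information. The paper handles edges by pushing them all into specially created leaf nodes and enumerating the $2^{k^2}$ edge subsets there (this is where its $2^{k^2}$ factor arises), while you use introduce-edge nodes and branch one edge at a time. Most notably, the paper's constraint formulation together with its Lemma on edge-disjoint multi-cuts makes the join node purely additive, $Tab_X[\ba]=Tab_Y[\ba]+Tab_Z[\ba]$, avoiding the metric-closure computation you perform over pairs of child signatures; conversely, your forget node is a simple restriction, while the paper must minimise over all augmentations of $\ba$ by the forgotten vertex. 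These are routine trade-offs, and both routes are sound and give the stated bound.
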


\begin{corollary}\label{thm:vertexCover}
Let $G$ be a graph, $k = \td(G)$ and $s$ and $t$ be two distinct vertices of $G$. Then for any $L\in\N$ an minimum $L$-cut between $s$ and $t$ can be found in time $f(k)n$, where $f$ is a computable function.
\end{corollary}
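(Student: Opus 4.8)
The plan is to reduce to Theorem~\ref{thm:main_theorem} after observing that on graphs of bounded tree-depth the value $L$ is, for the purposes of the $L$-cut problem, bounded by a function of $k$. I would rely on two standard facts about tree-depth: (a) it is monotone under taking subgraphs and dominates tree-width, so $\tw(G)\le\td(G)-1=k-1$; and (b) the path on $m$ vertices has tree-depth $\lceil\log_2(m+1)\rceil$, so a graph of tree-depth $k$ contains no path on more than $2^k-1$ vertices, i.e. no path with more than $2^k-2$ edges. I would also use the known fixed-parameter result that $k=\td(G)$ can be computed, together with a tree decomposition of width at most $k-1$, in time $g(k)\cdot n$ for some computable $g$.

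The heart of the argument is the claim that whenever $L\ge 2^k-1$, a set $F\subseteq E$ is an $L$-bounded $s,t$-cut if and only if $F$ separates $s$ from $t$ in $G$. One direction is immediate (if $F$ disconnects $s$ from $t$ the shortest-path length is infinite). For the other, if $G-F$ still contains an $s$--$t$ path $P$, then since $G-F\subseteq G$ fact~(b) gives that $P$ has at most $2^k-2<L+1$ edges, so $F$ is not an $L$-cut. In particular the family of $L$-bounded $s,t$-cuts is the same for every $L\ge 2^k-1$, namely the family of edge cuts separating $s$ from $t$. Hence, putting $L':=\min\{L,\,2^k-1\}$, a minimum $L$-bounded cut has the same size as a minimum $L'$-bounded cut, and every minimum $L'$-bounded cut is also a minimum $L$-bounded cut: if $L'=L$ this is trivial, and if $L'=2^k-1<L$ then both are exactly the minimum $s,t$-edge cuts.

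It then suffices to compute $k=\td(G)$ and an accompanying tree decomposition of width at most $k-1$ (in time $g(k)\cdot n$), set $L'=\min\{L,2^k-1\}$, and run the algorithm of Theorem~\ref{thm:main_theorem} on $(G,s,t,L')$. Since $L'\le 2^k-1<2^k$ and $\tw(G)\le k$, the running time of that algorithm is bounded by
\[
O\big(((L')^{k^2})^2\cdot 2^{k^2}\cdot n\big)\;=\;O\big(2^{2k^3+k^2}\cdot n\big),
\]
so the total running time is $f(k)\cdot n$ for a computable $f$, as required. The only real content is fact~(b): it turns the $L$-dependence of Theorem~\ref{thm:main_theorem} into a dependence on $k$ alone once $L$ is capped; the remaining steps are routine bookkeeping, the mildest subtlety being to ensure the tree(-depth) decomposition and the cap $2^k-1$ are obtained within the $f(k)\cdot n$ budget.
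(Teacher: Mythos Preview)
Your proof is correct and follows essentially the same approach as the paper: bound the longest path by a function of the tree-depth, observe that $\tw(G)\le\td(G)$, and invoke Theorem~\ref{thm:main_theorem} with the effective $L$ so bounded. Your version is in fact tidier---you use the correct bound $2^k-1$ on path length (the paper writes $2k$), and by capping $L$ at this value you handle all cases uniformly via Theorem~\ref{thm:main_theorem}, whereas the paper appeals to a separate minimum-cut routine when $L$ exceeds the bound.
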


\begin{proof}
As $k$ is the tree depth of $G$ it follows that the length of any path in $G$ can be upper-bounded by $2k$. It is a folklore fact, that $k$ is also a upper-bound on the tree-width of $G$. So we can use the Theorem~\ref{thm:main_theorem} for $L < 2k$ and any other polynomial algorithm for minimum cut problem otherwise.
\end{proof}

\begin{corollary}\label{thm:xp-corollary}
Let $G = (V, E)$ be a graph of tree-width $k$, $s \neq t \in V$ and $L \in \N$. There exists a computable function $f:\N\rightarrow\N$, such that a minimum $L$-cut between $s$ and $t$ can be found in time $O(n^{f(k)})$, where $n = |V|$.
\end{corollary}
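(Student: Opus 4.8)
The plan is to derive this as a corollary of Theorem~\ref{thm:main_theorem} by a case analysis on the size of $L$ relative to $n=|V|$. The running time in Theorem~\ref{thm:main_theorem} is polynomial in $n$ for fixed $k$ \emph{except} for the factor $(L^{k^2})^2$, so the only thing that needs attention is what happens when $L$ is large; I expect this to be the sole (and mild) obstacle.

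First I would handle the regime $L \ge n-1$. Here the observation is that any shortest $s,t$-path in an $n$-vertex graph is simple and hence has at most $n-1$ edges, so for every $F\subseteq E$ the graph $(V,E\setminus F)$ either has an $s,t$-path of length at most $n-1\le L<L+1$, or has no $s,t$-path at all. Consequently $F$ is an $L$-cut if and only if $F$ disconnects $s$ from $t$, so a minimum $L$-cut is exactly a minimum $s,t$-edge cut, which can be found in time $O(n^c)$ for an absolute constant $c$ by a standard max-flow computation. This is trivially $O(n^{f(k)})$ once $f(k)\ge c$.

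For the complementary regime $L < n-1$, I would simply invoke Theorem~\ref{thm:main_theorem} (assuming a width-$k$ tree-decomposition is given, or computed in linear FPT time beforehand), whose running time $O((L^{k^2})^2\cdot 2^{k^2}\cdot n)$ then becomes $O(n^{2k^2}\cdot 2^{k^2}\cdot n)=O(2^{k^2}\, n^{2k^2+1})$ because $L<n$. Since $s\ne t$ forces $n\ge 2$, we have $2^{k^2}\le n^{k^2}$, so this bound is $O(n^{3k^2+1})$.

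Finally I would combine the two branches: the algorithm compares $L$ with $n-1$, runs the ordinary min-cut routine in the first case and the algorithm of Theorem~\ref{thm:main_theorem} in the second, and one sets $f(k)=\max\{c,\,3k^2+1\}$, giving total running time $O(n^{f(k)})$. The key step is the degeneration argument of the first paragraph; the rest is substitution of $L<n$ into the bound of Theorem~\ref{thm:main_theorem}.
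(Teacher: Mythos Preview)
Your proposal is correct and follows the same route the paper takes. The paper does not spell out a separate proof of this corollary; its entire justification is the one-line remark right after the proof of Theorem~\ref{thm:main_theorem} that ``the value of the parameter $L$ can be upper-bounded by the number of vertices $n$ of the input graph $G$'', after which the $O(n^{f(k)})$ bound follows by substituting $L\le n$ into the running time of Theorem~\ref{thm:main_theorem}. Your case split $L\ge n-1$ versus $L<n-1$, with the large-$L$ case degenerating to an ordinary minimum $s,t$-cut, is exactly a fleshed-out version of that observation.
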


\begin{theorem}\label{thm:hardness}
{\sc Minimal Length Bounded Cut} parametrized by path-width is $\W[1]$-hard.
\end{theorem}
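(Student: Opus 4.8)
Plan for proving Theorem \ref{thm:hardness} ($W[1]$-hardness of MLBC parameterized by path-width).

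The natural strategy is a parameterized reduction from a known $W[1]$-hard problem whose instances can be encoded using bounded path-width. The most promising source is the result of Golovach and Thilikos cited in the excerpt — finding $k$ vertex-disjoint (or edge-disjoint) paths of bounded length between two terminals on graphs of bounded tree-width is $W[1]$-hard — or, alternatively, a direct reduction from \textsc{Multicolored Clique} or from $k$-\textsc{Disjoint Paths}. Since our parameter is path-width (not tree-width), the reduction must produce a host graph of path-width bounded by a function of $k$ alone. I expect that the cleanest route is to reduce from a problem that is already known to be $W[1]$-hard parameterized by path-width, or to build a gadget construction directly from a problem like $k$-\textsc{Clique} on general graphs.

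The plan is as follows. First, I would fix the source problem; suppose we take \textsc{Clique} parameterized by solution size $k$, which is $W[1]$-hard. Given a graph $H$ on $m$ vertices in which we seek a $k$-clique, I would construct an MLBC instance $(G, s, t, L)$ together with a budget $B$ so that $G$ has path-width $O(k)$ (or some explicit function of $k$), $L$ is bounded by a function of $k$, and $G$ admits an $L$-cut of size at most $B$ if and only if $H$ has a $k$-clique. The core of the construction is a sequence of ``selection'' gadgets laid out linearly so that a path decomposition can sweep through them with bags of width depending only on $k$: each gadget forces the cut to ``commit'' to choosing one vertex of $H$ per color class, and ``verification'' paths between gadgets of length just below $L$ ensure that the cut must also spend budget certifying that chosen endpoints are adjacent in $H$. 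The budget $B$ is set so that the only way to block all short $s$--$t$ paths within budget is to have a consistent choice corresponding to a clique. The linear ``chain'' structure of the gadgets is what keeps the path-width small — this is the standard trick for converting tree-width hardness into path-width hardness.

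The main obstacle I anticipate is the simultaneous control of three quantities: (i) path-width must be bounded purely in terms of $k$, (ii) the length bound $L$ must also be bounded in terms of $k$ (otherwise Theorem \ref{thm:main_theorem} would trivially rule out such a reduction being informative — indeed, if both $L$ and path-width were bounded we would be in FPT, so at least one of them must grow with $m$; the natural choice is to let $L$ grow while path-width stays bounded, \emph{or} keep $L$ bounded and accept that this shows hardness only for unbounded path-width with bounded $L$), and (iii) the equivalence between cut size $\le B$ and existence of a $k$-clique. Reconciling (i) and (ii) is delicate: the gadgets must be long enough (to create many parallel short-path threats that encode adjacency information) yet arranged so that a single path decomposition of small width covers them. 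I would resolve this by making the ``long'' dimension of the construction the one along which the path decomposition sweeps, so that at any moment only $O(k)$ gadget-interfaces are active in a bag. The correctness proof then splits into the two standard directions: from a $k$-clique, exhibit an explicit $L$-cut of size $B$ by selecting the corresponding gadget edges; conversely, from any $L$-cut of size $\le B$, argue by a counting/pigeonhole argument on the budget that it must induce a single consistent vertex choice per color and that all pairwise verification paths are blocked, forcing the chosen vertices to be pairwise adjacent.
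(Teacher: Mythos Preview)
Your high-level strategy is the same one the paper uses: a parameterized reduction from $k$-\textsc{Multicolor Clique} that produces an MLBC instance of path-width bounded by a function of $k$ only, while the length bound $L$ (and the cut budget) are allowed to grow polynomially with the instance size. So at the level of ``plan'', you are on the right track.

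That said, two points deserve attention.

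\textbf{The confusion about $L$.} Your item (ii) is stated backwards and then half-corrected in the parenthetical. To be explicit: $L$ \emph{cannot} be bounded by any function of $k$ in such a reduction, precisely because of Theorem~\ref{thm:main_theorem}; and your alternative ``keep $L$ bounded and let path-width be unbounded'' is incoherent when path-width is the parameter. The only viable option is the one you eventually land on: path-width $=O(g(k))$ and $L$ polynomial in $|V(H)|$. In the paper's construction one has path-width $O(k^2)$ and $L = 2(N+M)+N^4+N^2+N-1$, where $N=|V_i|$ and $M=|E_{ij}|$.

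\textbf{The missing technical content.} Everything that makes the reduction work is in the gadgetry, and your proposal contains none of it. The paper's construction is built from two concrete ingredients: a \emph{butte} $B(s',t',h,Q)$, consisting of $h$ length-$2$ paths and $Q$ length-$(h+2)$ paths between $s'$ and $t'$, so that with a bounded budget the only useful operation is to ``ridge'' it (cut all $h$ short paths) and thereby add exactly $h$ to any $s'$--$t'$ distance; and a \emph{highland}, a chain of buttes with carefully chosen heights. For each ordered pair $(i,j)$ one builds a highland whose low buttes encode the vertices of $V_i$ (heights $N^2+1,\dots,N^2+N$) and whose high buttes encode the edges of $E_{ij}$ (heights near $N^4$). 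The arithmetic is tuned so that the budget forces exactly one low and one high butte to be ridged per highland, and so that the ridged pair is consistent (vertex incident to edge) if and only if their heights sum to exactly $N^4+N^2+N$. Short \emph{valley paths} between neighbouring highlands enforce that the $k-1$ copies of a vertex choice (resp.\ the two copies of an edge choice) agree. The path-width bound comes from the observation that collapsing each butte to an edge yields a union of $k$ grids of size $(k-1)\times(N-2)$ and $\binom{k}{2}$ grids of size $2\times(M-2)$, glued at $O(k^2)$ vertices; since an $a\times b$ grid has path-width $\min(a,b)$, this gives $O(k^2)$.

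None of these mechanisms---the height arithmetic that couples vertex choice to edge choice, the valley paths that synchronise choices across highlands, or the $Q$-fold parallel long paths that make ridging the only affordable option---appear in your plan, and they are where the actual difficulty lies. Your ``selection gadgets laid out linearly'' and ``verification paths'' are gestures in the right direction, but you would still need to invent something with the same force as the butte/highland machinery to turn the plan into a proof.
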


\paragraph{Tree-width versus tree-depth}
Admitting an $\FPT$ algorithm for a problem when parameterized by the tree-width implies an $\FPT$ algorithm for the problem when parameterized by the tree-depth, as parameter-theoretic observation easily shows. On the other hand, the $\FPT$ algorithm parameterized by the tree-width usually uses exponential (in the tree-width) space, while the tree-depth version uses only polynomial space (in the tree-depth).

From this point of view, it seems to be interesting to find problems that are "on the edge between path-width and tree-depth". That is problems that admit an $\FPT$ algorithm when parameterized by the tree-depth, but being $\W$[1]-hard when parameterized by the path-width.

The only other result of this type, we a re aware of in the time of writing this article is by Gutin et al.~\cite{GJW:MixedChinese}. The {\sc Minimum Length Bounded Cut} problem is also a problem of this kind---as Theorems~\ref{thm:hardness} and \ref{thm:vertexCover} demonstrate.

Theorem~\ref{thm:main_theorem} gives us that the {\sc MLBC} problem is fixed parameter tractable ($\FPT$) when parametrized by the length of paths and the tree-width and that it belongs to $\XP$ when parametrized by the tree-width only (and thus solvable in polynomial time for graph classes with constant bounded tree-width).

\begin{theorem}\label{thm:refuteKernel}
There is no polynomial kernel for the \MLBC problem parameterized by the tree-width of the graph and the length $L,$ unless $\NP\subseteq\coNPpoly.$
\end{theorem}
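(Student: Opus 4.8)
The plan is to show that \MLBC, parameterized by $\tw(G)+L$, does not admit a polynomial kernel by giving an OR-cross-composition (or equivalently an OR-composition together with the fact that the problem is in $\NP$) from an $\NP$-hard problem into it. The natural source is the \MLBC problem itself (its decision version: "is there an $L$-cut of size at most $B$?"), which is $\NP$-hard for every fixed $L\ge 4$ by Itai et al.~\cite{itai}; fixing $L=4$ gives us instances in which the parameter contribution from $L$ is already constant, so the whole burden is to keep the tree-width bounded after composition. Thus I would take $q$ instances $(G_1,s_1,t_1,B_1),\dots,(G_q,s_q,t_q,B_q)$ of $4$-\MLBC (after a polynomial-time preprocessing that makes all the budgets equal to a common value $B$, e.g.\ by padding with short parallel $s_i$--$t_i$ paths, and that puts them into a polynomial number of equivalence classes by $(|V(G_i)|,B)$ as required by the cross-composition framework) and build a single instance whose answer is the OR of the $q$ answers and whose tree-width and length bound are bounded by a fixed constant.

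The core construction is a standard "disjoint union plus a selection gadget" that forces a solution to cut all of the graphs simultaneously while only paying for one. Concretely, I would take the disjoint union $G_1\disjcup\cdots\disjcup G_q$, identify all the sources into a single vertex $s$ and all the sinks into a single vertex $t$, and then it already holds that an $L$-cut of $G$ of size $\le B$ exists iff some $G_i$ has an $L$-cut of size $\le B$ --- because any $s$--$t$ path in $G$ lies entirely inside one $G_i$, so cutting one $G_i$ below $B$ suffices and no cheaper global option exists. The key point is that identifying the terminals does not blow up the tree-width: if each $G_i$ has a tree decomposition of width $k$ in which the bag containing $s_i$ and the bag containing $t_i$ can be assumed adjacent (or can be made so by adding $s_i,t_i$ to a path of bags, increasing width by at most $2$), then we can glue these decompositions along a common path that carries $\{s,t\}$, obtaining a tree decomposition of $G$ of width at most $k+2$; since $4$-\MLBC is already $\NP$-hard on graphs of constant tree-width (the instances produced by the reductions of Itai et al., and in any case one may restrict to such a class, as the hardness reductions yield series-parallel-like structure, or one invokes that bounded-treewidth $4$-\MLBC is still $\NP$-hard --- I would cite Baier et al.~\cite{kolman} whose integrality-gap instances are series-parallel, or redo the Itai reduction to land in bounded treewidth), the composed instance has $\tw(G)+L=O(1)$. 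This is exactly an OR-cross-composition into \MLBC parameterized by $\tw+L$, so by the framework of Bodlaender--Jansen--Kratsch, unless $\NP\subseteq\coNPpoly$, \MLBC has no polynomial kernel for this parameterization.

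The step I expect to be the main obstacle is ensuring both halves at once: that the source problem is $\NP$-hard \emph{and} restricted to a class of bounded tree-width with a \emph{bounded} length parameter $L$, since an unrestricted OR-union would not keep $\tw$ small only if the $G_i$ themselves had unbounded tree-width. So the crux is to pin down a concrete $\NP$-hard subfamily of \MLBC instances with $\tw(G_i)=O(1)$ and $L=O(1)$; I would handle this by adapting the classical $L=4$ hardness reduction of Itai et al.~\cite{itai} (from a variant of SAT or Vertex Cover) and checking that its output graphs have constant tree-width, or alternatively by citing an existing bounded-treewidth hardness result for length-bounded cuts. Once that subfamily is fixed, the remaining pieces --- budget equalization by padding, the terminal-identification gluing of tree decompositions, and the verification that $s$--$t$ paths are confined to single components --- are routine. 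A final minor technical point is to phrase everything as a genuine cross-composition (grouping instances by size so the parameter of the output is polynomial in $\max_i|x_i|+\log q$), which is immediate here since the output parameter is an absolute constant.
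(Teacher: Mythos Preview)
Your proposal contains two genuine gaps, and in fact the approach cannot work as stated.

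First, the construction you describe (disjoint union with all sources identified to $s$ and all sinks to $t$) is an \emph{AND}-composition, not an OR-composition. You correctly note that every $s$--$t$ path of $G$ lies inside a single $G_i$, but the conclusion you draw from this is backwards: to make every short $s$--$t$ path long you must destroy the short paths in \emph{every} $G_i$, not just one. Cutting only $G_1$ leaves all short $s$--$t$ paths through $G_2,\dots,G_q$ intact. Hence the minimum $L$-cut of $G$ is $\sum_i \text{(minimum $L$-cut of $G_i$)}$, and the composed instance is a YES-instance with budget $qB$ if and only if \emph{all} inputs are YES-instances. So the construction you wrote down is exactly the AND-composition the paper uses, not an OR-composition.

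Second, and more fatally for your plan, the ``crux'' you identify---finding an $\NP$-hard subfamily of $4$-\MLBC instances with $\tw(G_i)=O(1)$---is impossible (assuming $\mathsf{P}\neq\NP$): it is contradicted by Theorem~\ref{thm:main_theorem} of this very paper, which solves \MLBC in polynomial time whenever both $L$ and $\tw$ are bounded. The series-parallel instances of Baier et al.\ that you cite exhibit an LP integrality gap, not $\NP$-hardness, and the Itai et al.\ reduction does not (and cannot) land in bounded tree-width.

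The paper sidesteps both issues. It uses the same terminal-identification construction but reads it correctly as an AND-cross-composition, and it does \emph{not} try to keep the tree-width constant. Instead it observes that the resulting graph has path-width at most $\max_i |V(G_i)| + 2$ (put each entire $G_i$ in a bag together with $s,t$ and chain the bags), so the output parameter $\tw(G)+L$ is bounded polynomially in $\max_i |x_i|$; this is exactly what the cross-composition framework requires. To make the AND behave cleanly with a single budget $qK$, the paper starts from the \emph{gapped} decision version of \MLBC (either a cut of size $K$ exists, or every cut has size at least $1.13K$), which is still $\NP$-hard for fixed $L=4$; then one bad instance already pushes the total above $qK$.
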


We want to mention that our techniques apply also for more general version of the {\sc MLBC} problem.

\paragraph{Length-bounded multi-cut}
We consider a generalized problem, where instead of only two terminals, we are given a set of terminals. For every pair of terminals, we are given a constraint---a lower bound on the length of the shortest path between these terminals. More formally:

Let $S = \{s_1,\dots,s_k\} \subset V$ be a subset of vertices of the graph $G=(V,E)$ and let $a:S\times S \rightarrow \N$ be a mapping. We call a subset of edges $F\subseteq E$ of $G$ an {\it $\ba$-bounded multi-cut} if length of the shortest path between $s_i$ and $s_j$ in the graph $(V, E\setminus F)$ is at least $a(s_i,s_j)$ long for every $i \neq j$. Again if $F$ has smallest possible size, we call it {\it minimum $\ba$-bounded $\{s_1,\dots,s_k\}$-multi-cut}. We call the vertices $s_1,\dots, s_k$ {\it terminals}. Finally, as there are only finitely many values of the mapping $a$ we write $a_{s,t}$ instead of $a(s,t)$, we also write $\ba$ instead of function $a$. Let $L \ge \max_{s,t\in S} a(s,t)$, we say that the problem is {\it $L$-limited}.

%  \vskip 0.5cm
  \problem{Minimum Length Bounded Multi-Cut (MLBMC)}	%% neriakt tomu radeji Triangle Minim...??
	{graph $G=(V,E)$, set $S\subset V$ and $a_{s,t}\in\N$ for all $s,t\in S,$ satisfying the triangle inequalities}
	{find a minimum length bounded $S$ multi-cut $F\subset E$}
% \vskip 0.5cm

\begin{theorem}\label{thm:multiterminal}
Let $G=(V,E)$ be a graph of tree-width $k$, $S\subseteq V$ with $|S| = t$ and let $p:= t+k$.
Then for any $L\in\N$ and any $L$-limited length-constraints $\ba$ on $S$ an minimum $\ba$-bounded multi-cut can be computed in time $O((L^{p})^2\cdot 2^{p^2}\cdot n)$.
\end{theorem}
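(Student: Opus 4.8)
The plan is to reduce the multi-cut problem to exactly the dynamic-programming setting behind Theorem~\ref{thm:main_theorem}, by folding all the terminals into the tree decomposition. First I would take a tree decomposition of $G$ of width $k$ and add the whole terminal set $S$ to every bag. The result is still a valid tree decomposition---the connectivity requirement for the newly inserted vertices is trivially met, since they now occur in all bags---and its width is at most $k+t=p$. Standard linear-time post-processing turns it into a nice tree decomposition with $O(n)$ nodes, normalized so that the root bag is exactly $S$. From now on every bag $B_x$ contains all terminals, so in the full graph any $s_i$--$s_j$ path that leaves and re-enters the part $G_x$ handled at a node $x$ must do so through $B_x$.

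Next I would set up the table. For a node $x$ let $G_x$ be the subgraph induced by the vertices in the subtree of $x$, and for a partial cut $F_x\subseteq E(G_x)$ let its \emph{signature} be the capped pairwise-distance matrix $\delta_x:B_x\times B_x\to\{0,1,\dots,L,\top\}$ with $\delta_x(u,v)=\min\{\mathrm{dist}_{G_x\setminus F_x}(u,v),\,L+1\}$, where $\top$ stands for "$\ge L+1$"; the table entry stores the minimum $|F_x|$ over all $F_x$ realizing a prescribed signature. This is the same statistic used in Theorem~\ref{thm:main_theorem}, only on a wider bag, so the transitions are the same in form: an introduce-vertex node gives the new vertex distance $0$ to itself and $\top$ to everyone; an introduce-edge node branches on cutting the edge (keep $\delta_x$, no cost) versus keeping it (recompute $\delta_x$ as the metric closure of $\delta_x$ together with the new unit edge, cost $+1$); a forget node projects the matrix onto the surviving bag vertices---correct because the forgotten vertex is separated from the rest by $B_x$, so all shortcuts through it inside $G_x$ are already recorded; and a join node forms the metric closure of the entrywise minimum of the two children's matrices and adds their costs, with the usual convention that each intra-bag edge is introduced, hence paid for, exactly once. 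The length constraints $\ba$ are enforced only at the root: distances only decrease as more of the graph is added, so no valid partial solution is discarded early, and at the root $G_r=G$, $B_r=S$, so we output the minimum cost over signatures with $\delta_r(s_i,s_j)\ge a_{s_i,s_j}$ for all $i\neq j$ (meaningful since $a_{s_i,s_j}\le L$ by $L$-limitedness).

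For the running time, each bag has at most $p+1$ vertices, so a node carries at most $(L+2)^{(p+1)^2}$ signatures; the join node dominates, iterating over pairs of child signatures, and bounding $(L+2)^{(p+1)^2}$ by $2^{p^2}\cdot L^{O(p^2)}$ exactly as in Theorem~\ref{thm:main_theorem} gives the claimed bound, with $p$ in the role played there by $k$. The part that needs genuine care---and is the heart of the argument---is the correctness of the forget and join transitions: one must show that the capped pairwise-distance matrix on a bag is a sufficient summary of $G_x$ with respect to every possible completion, i.e.\ that any $s_i$--$s_j$ path in the final graph splits into alternating subpaths inside and outside $G_x$ with all breakpoints in $B_x$, so that its length is at least the value produced by the metric-closure compositions. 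Here the separator property of tree decompositions together with $S\subseteq B_x$ for every $x$ is used; the remaining verifications are routine and carry over verbatim from Theorem~\ref{thm:main_theorem}.
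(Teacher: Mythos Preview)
Your plan is correct and is essentially the paper's own argument: fold the terminal set into the decomposition so that the width becomes at most $p=k+t$, and then rerun the dynamic program behind Theorem~\ref{thm:main_theorem} verbatim with $p$ in place of $k$. The paper phrases the DP slightly differently---tables are indexed by length-constraint vectors satisfying the triangle inequalities, edges live in leaf nodes, and the join step is justified via Lemma~\ref{thm:edge-disjoint-MLBMC}---whereas you index by capped distance matrices and use introduce-edge nodes with a metric-closure join; these are interchangeable presentations of the same computation.

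One slip to fix: in your introduce-edge transition the costs are swapped. Cutting the edge (putting it into $F$) leaves $\delta_x$ unchanged but incurs cost $+1$; keeping the edge updates $\delta_x$ via the metric closure with the new unit edge and incurs cost $0$.
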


%Note that Theorem~\ref{thm:multiterminal} does not rely on the triangle inequalities.

\section{Preliminaries}\label{sec:preliminaries}
%\paragraph{Notation} %znaèení
In this section we recall some standard definitions from the graph theory and state what a tree decomposition is. 
After this we introduce changes of the tree decomposition specific for our algorithm. 
We proceed by the notion of auxiliary graphs used in proofs of our algorithm correctness.
Finally, in Section~\ref{s:kernelRefutePrelim} we summarize the results allowing us to prove that it is unlikely for a parameterized problem to admit a polynomial kernelization procedure.

%\paragraph{}
%The {\it degree} of a vertex $v$ in a graph $(V,E)$ is $deg(v) = |\{e\in E\colon v\in e\}|$ (the number of neighbors of $v$). Recall that a graph is {\it connected}, if there exists a path between every pair of its vertices. A {\it tree} is a connected acyclic graph, a {\it star} is a tree that has at most one vertex of degree greater than~1. We call a vertex {\it leaf}, if it has degree 1. For a graph $G$ and $U\subseteq V(G)$ we denote by $G[U] := (U,E\cap\binom{U}{2})$ the {\it induced subgraph of the graph} $G$ (in this case it is induced by vertices of the set $U$). For a set $S$ we denote by $2^S$ the set of all subsets of $S$.

We use the notion of tree decomposition of the graph:

\begin{definition}
A {\em tree decomposition} of a graph $G=(V,E)$ is a pair ${\cal T} = (\{B_i\colon i\in I\},T = (I,F)),$ where $T$ is a rooted tree and $\{B_i\colon i\in I\}$ is a family of subsets of $V,$ such that
\begin{enumerate}
  \item for each $v\in V$ there exists an $i \in I$ such that $v\in B_i$,
  \item for each $e\in E$ there exists an $i \in I$ such that $e\subseteq B_i$,
  \item for each $v\in V, I_v = \{i \in I\colon v\in B_i\}$ induces a subtree of $T.$
\end{enumerate}
We call the elements $B_i$ the {\it nodes}, and the elements of the set $F$ the decomposition edges.
\end{definition}
We define a width of a tree decomposition ${\cal T} = (\{B_i\colon i\in I\}, T)$ as $\max_{i\in I}|B_i|-1$ and the {\it tree-width} $\tw(G)$ of a graph $G$ as the minimum width of a tree decomposition of the graph $G$. 
Moreover, if the decomposition is a path we speak about the {\it path-width} of $G$, which we denote as $\pw(G)$. 

\paragraph{Nice tree decomposition} \cite{kloks}~
For algorithmic purposes it is common to define a {\it nice tree decomposition} of the graph. 
We naturally orient the decomposition edges towards the root and for an oriented decomposition edge $(B_j,B_i)$ from $B_j$ to $B_i$ we call $B_i$ the {\it parent} of $B_j$ and $B_j$ a {\it child} of $B_i$.
If there is an oriented path from $B_j$ to $B_i$ we say that $B_j$ is a {\it descendant} of $B_i$.

We also adjust a tree decomposition such that for each decomposition edge $(B_i,B_j)$ it holds that $| |B_i|-|B_j| | \le 1$ (i.e. it joins nodes that differ in at most one vertex). The in-degree of each node is at most $2$ and if the in-degree of the node $B_k$ is $2$ then for its children $B_i,B_j$ holds that $B_i = B_j = B_k$ (i.e. they represent the same vertex set).

We classify the nodes of a nice decomposition into four classes---namely {\it introduce nodes}, {\it forget nodes}, {\it join nodes} and {\it leaf nodes}. We call the node $B_i$ an introduce node of the vertex $v$, if it has a single child $B_j$ and $B_i\setminus B_j = \{v\}$. We call the node $B_i$ a forget node of the vertex $v$, if it has a single child $B_j$ and $B_j\setminus B_i = \{v\}$. If the node $B_k$ has two children, we call it a join node (of nodes $B_i$ and $B_j$). Finally we call a node $B_i$ a leaf node, if it has no child.

\begin{proposition} \cite{kloks} \label{thm:nodes_prop}  %% vznik dekompozice
Given a tree decomposition of a graph $G$ with $n$ vertices that has width $k$ and $O(n)$ nodes, we can find a nice tree decomposition of $G$ that also has width $k$ and $O(n)$ nodes in time $O(n)$.
\end{proposition}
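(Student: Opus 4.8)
The plan is to begin with the given tree decomposition $\mathcal{T}=(\{B_i : i \in I\}, T)$ of width $k$ and to repeatedly apply local surgeries to the rooted tree $T$, each of which preserves all three axioms of a tree decomposition and never enlarges a bag, until every structural requirement in the definition of a nice decomposition holds. First I would root $T$ at an arbitrary node. Then I would \emph{binarize} it: whenever a node has $d\ge 3$ children I replace it by a path of $d-1$ copies carrying the same bag, hanging one of the original children off each copy. All bags are unchanged, so the width stays $k$, and now every node has at most two children. Since the tree has $O(n)$ edges, this creates only $O(n)$ new nodes.

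Next I would turn every two-child node into a proper join node and then repair the adjacent-bag condition. For a node $B_k$ with two children $X$ and $Y$, I insert directly below $B_k$ two fresh copies of the bag $B_k$, one as the new parent of $X$ and one as the new parent of $Y$; now both children of $B_k$ carry the bag $B_k$, as the definition demands. Afterwards, along every remaining oriented edge $(B_i,B_j)$ whose endpoints differ in more than one vertex, I subdivide the edge by a chain that first forgets the vertices of $B_j\setminus B_i$ one by one and then introduces the vertices of $B_i\setminus B_j$ one by one; each intermediate bag is a subset of $B_i\cup B_j$ of size at most $k+1$, so the width is unaffected, and every edge of the chain now joins bags differing in exactly one vertex, i.e.\ genuine introduce and forget nodes. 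Whenever these steps produce a single-child node whose bag equals that of its child (which is unclassifiable), I simply contract that edge. Each chain has length at most $|B_i\triangle B_j|\le 2(k+1)$, and each join repair costs $O(1)$ nodes, so over the $O(n)$ edges the total number of nodes remains $O(n)$ for fixed width $k$.

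It remains to check that every surgery preserves the tree-decomposition axioms and to bound the running time. Copying a bag or subdividing an edge with subsets of $B_i\cup B_j$ clearly keeps every vertex and every edge of $G$ covered by some bag. The delicate point, which I expect to be the main obstacle, is axiom (3): I must choose the order of the forget and introduce operations on each subdivided edge so that the occurrences of any fixed vertex $v$ still induce a connected subtree---intuitively, $v$ must be forgotten exactly once on the way from its subtree toward the root and introduced as a single contiguous block---and I must verify that binarization, join repair and subdivision do not, in combination, tear some $I_v$ apart. Once connectivity is argued, the width bound and the four-way node-type classification follow directly by construction. Finally, each phase is a single traversal of the current tree performing an amount of work per created node that is constant in $n$, and since the final decomposition has $O(n)$ nodes the whole procedure runs in time $O(n)$, as claimed.
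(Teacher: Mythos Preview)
The paper does not give its own proof of this proposition: it is quoted verbatim as a result of Kloks and left unproved. So there is nothing to compare your argument against beyond the standard construction in the cited reference.

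Your outline is exactly that standard construction---root, binarize by duplicating bags, pad below two-child nodes to make them join nodes, and subdivide every remaining parent--child edge into a forget-then-introduce chain---and it is correct. The connectivity worry you flag is not actually delicate: on the chain between $B_j$ and $B_i$ a vertex of $B_j\setminus B_i$ occupies a prefix, a vertex of $B_i\setminus B_j$ occupies a suffix, and a vertex of $B_i\cap B_j$ occupies the whole chain, so each $I_v$ stays connected after every local surgery. The only point worth tightening is the node count: your chains have length at most $2(k+1)$, so strictly you obtain $O(kn)$ nodes and $O(kn)$ time rather than $O(n)$; this matches what Kloks actually proves, and the $O(n)$ in the proposition should be read with $k$ treated as a constant (which is how the paper uses it).
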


So far we have described a standard nice tree decomposition.
Now we change the introduce nodes. 
Let $B_j$ be an introduce node and $B_i$ its parent.
We add another two copies $B_j^p, B_j^s$ of $B_j$ to the decomposition. 
We remove decomposition edge $(B_j,B_i)$ and add three decomposition edges $(B_j,B_J^p), (B_j^s, B_j^p)$ and $(B_j^p, B_i)$.
Note that after this operation, $B_j^s$ is a leaf of the decomposition, $B_j$ remains an introduce node and $B_j^p$ is a join node. We call $B_j^s$ a sibling of $B_j$.
% (see Fig.~\ref{fig:decomposition-introduction}). 
Note that by these further modifications we preserve linear number of nodes in the decomposition.

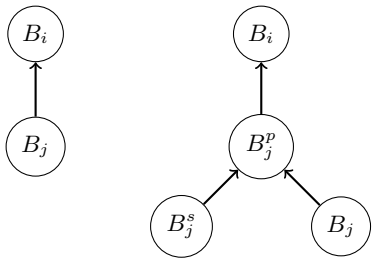
\begin{figure}[htb!]
\centering
\begin{tikzpicture}[node distance = 1.5cm]
  \node[draw,circle] (BBi) {$B_i$};
  \node[below of=BBi,draw,circle] (BBj) {$B_j$};
  \path[->,draw,thick] (BBj) -- (BBi);

  \node[draw,circle] at (3,0) (Bi) {$B_i$};
  \node[below of=Bi,draw,circle] (Bjp) {$B_j^p$};
  \node[below left of=Bjp,draw,circle] (Bjs) {$B_j^s$};
  \node[below right of=Bjp,draw,circle] (Bj) {$B_j$};

  \path[->,draw,thick] (Bjp) -- (Bi);
  \path[->,draw,thick] (Bj) -- (Bjp);
  \path[->,draw,thick] (Bjs) -- (Bjp);
  
\end{tikzpicture}
\caption{Change of Introduction nodes in the nice tree decomposition}
\label{fig:decomposition-introduction}
\end{figure}

\paragraph{Auxiliary subgraphs}
Recall that for each edge there is at least one node containing that particular edge. %We choose for each edge a node, moreover it is possible to choose a leaf in our decomposition.
Note that after our modification of the decomposition for each edge $e$ there is at least one leaf $B_i$ of the decomposition satisfying $e \subseteq B_i$. 
To see this, suppose this is not true and that some edge, say $e$, must be placed into a non-leaf node $B_j$. 
We may suppose that $B_j$ is an introduce node (for join or forget node choose its descendant).
However, in our construction any introduce node $B_j$ has a sibling $B_j^s$ such that is a leaf in the decomposition tree and their bags are equal.

Thus, for every edge $e \in E(G)$ we choose an arbitrary leaf node $B_i$ such that $e\in B_i$ and say that the edge $e$ {\it belongs} to the leaf node $B_i$. 
By this process we have chosen set $E_i\subset E(G)$ for each leaf node $B_i$.
We further use the notion of {\it auxiliary graph} $G_{B_i},$ or $G_i$ for short.
For a leaf node $B_i$ we set a graph $G_i = (V_i = B_i, E_i)$.
For a non-leaf node $B_i$ we set a graph $G_i = (V,E)$, where $V = B_i\cup \bigcup_{B_j \text{ child of } B_i} V_j$ and $E = \bigcup_{B_j \text{ child of } B_i} E_j$.

%\todo{predlat a uhladit -- tvorba grafu}
%For a node $X$ of the graph we construct graph $G_X$ as follows. If $X$ is a leaf of the decomposition we set $V(G_X) := X$ and $E(G_X)$ to the set of the edges belonging to leaf $X$. If $X$ is not a leaf of the decomposition we set $G_X$ to be the union of graphs of its descendants, formally $G_X := \cup_{Y {\text descendant } X} G_Y$.

%Note that the generalized {\sc MLBMC} version of the problem is harder than the {\sc MLBC} version, not only by containing it as a special case. Let us demonstrate this on a very small example -- let us form an instance of the problem by graph $G$, terminals $\{s,t,u\}$ and length constraints such that $l_{s,t} > l_{t,u} + l_{u,s}$ holds (e.g. $l_{s,t} = 20$, $l_{t,u} = 3$ and $l_{s,u} = 4$).
%It is obvious that for fulfilling the length constraint for $s$ and $t$, one must also add some artificial length to the constraints for $t$ and $u$ and for $s$ and $u$, respectively, such that there is no shortcut path between $s$ and $t$ via the vertex $u$. Let us mark there lengths as $b_{t,u}$ and $b_{u,s}$, respectively.
%The right solution is the also characterized by $\bb$. We say that the solution for the instance of the problem is {\it consistent}, if there are no shortcuts (as described in previous example).

\subsection{Preliminaries on refuting polynomial kernels}\label{s:kernelRefutePrelim}
Here we present simplified review of a framework used to refute existence of polynomial kernel for a parameterized problem from Chapter 15 of a monograph by Cygan et al.~\cite{CFKLMPPS-FPT}.

In the following we denote by $\Sigma$ a final alphabet, by $\Sigma^*$ we denote the set of all words over $\Sigma$ and by $\Sigma^{\le n}$ we denote the set of all words over $\Sigma$ and length at most $n.$

\begin{definition}[Polynomial equivalence relation]\label{d:polyRelation}
An equivalence relation $\mathcal{R}$ on the set $\Sigma^*$ is called {\em polynomial equivalence relation} if the following conditions are satisfied:
\begin{enumerate}
  \item There exists an algorithm such that, given strings $x,y\in\Sigma^*,$ resolves whether $x\equiv_\mathcal{R} y$ in time polynomial in $|x| + |y|$.
  \item Relation $\mathcal{R}$ restricted to the set $\Sigma^{\le n}$ has at most $p(n)$ equivalence classes for some polynomial $p(\cdot).$
\end{enumerate}
\end{definition}

\begin{definition}[Cross-composition]\label{d:crossComposition}
Let $L\subseteq\Sigma^*$ be an unparameterized language and $Q\subseteq\Sigma^*\times\N$ be a parametrized language. We say that $L$ {\em cross-composes} into $Q$ if there exists a polynomial equivalence relation $\mathcal{R}$ and an algorithm $\mathcal{A},$ called the cross-composition, satisfying the following conditions. 
The algorithm $\mathcal{A}$ takes on input a sequence of strings $x_1,x_2,\dots,x_t\in\Sigma^*$ that are equivalent with respect to $\mathcal{R}$, runs in polynomial time in $\sum_{i = 1}^t |x_i|,$ and outputs one instance $(y,k)\in\Sigma^*\times\N$ such that:
\begin{enumerate}
  \item $k\le p(max_{i = 1}^t |x_i|, \log t)$ for some polynomial $p(\cdot,\cdot),$ and
  \item $(y,k)\in Q$ if and only if $x_i\in L$ for all $i.$
\end{enumerate}
\end{definition}

With this framework, it is possible to refute even stronger reduction techniques---namely polynomial compression---according to the following definition:

\begin{definition}[Polynomial compression]\label{d:polyCompression}
A {\em polynomial compression} of a parameterized language $Q\subseteq\Sigma^*\times\N$ into an unparameterized language $R\subseteq\Sigma^*$ is an algorithm that takes as an input an instance $(y,k)\in\Sigma^*\times\N,$ works in polynomial time in $|x| + k,$ and returns a string $y$ such that:
\begin{enumerate}
  \item $|y|\le p(k)$ for some polynomial $p(\cdot),$ and
  \item $y\in R$ if and only if $(x,k) \in Q.$
\end{enumerate}
\end{definition}

It is possible to refute existence of polynomial kernel using Definitions~\ref{d:polyRelation},\ref{d:crossComposition} and \ref{d:polyCompression} with the help of use of the following theorems and a complexity assumption that is unlikely to hold---namely $\NP\subseteq\coNPpoly$.

\begin{theorem}[\cite{Drucker12}]
Let $L,R\subseteq\Sigma^*$ be two languages. Assume that there exists an AND-distillation of $L$ into $R.$ Then $L\in\coNPpoly.$
\end{theorem}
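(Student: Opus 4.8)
The plan is to follow the skeleton of Drucker's information\nobreakdash-theoretic argument~\cite{Drucker12}; the statement is not a routine consequence of the definitions, so what follows is mostly a map of where the work lies. \textbf{Normalisation and goal.} First I would pad instances so that, fixing a length parameter $n$, every instance in play lies in $\{0,1\}^n$, and so that the AND-distillation, applied to a $t$-tuple of such instances, runs in time polynomial in $t\cdot n$ and outputs a string of length at most $m := p(n)$ with $p$ independent of $t$. Writing $Y_n = L\cap\{0,1\}^n$ and $N_n = \{0,1\}^n\setminus Y_n$, one gets for every $t$ a polynomial-time map $f_t\colon (\{0,1\}^n)^t \to \{0,1\}^{\le m}$ together with the fixed language $R$, with $f_t(x_1,\dots,x_t)\in R$ exactly when all $x_i\in Y_n$. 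It then suffices to produce, for each $n$, an advice string of length polynomial in $n$ and a nondeterministic polynomial-time verifier that jointly decide ``$x\in N_n$'': this puts $\overline L$ into $\NP/\mathsf{poly}$, i.e. $L\in\coNPpoly$.

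\textbf{The combinatorial core.} Next I would set $A_t := f_t(Y_n^t)\subseteq\{0,1\}^{\le m}$ and observe that $f_t^{-1}(A_t) = Y_n^t$: a tuple containing a no-instance has a false AND, so its image lies outside $R$; were that image nonetheless in $A_t$, the same string would also lie in $R$, being the image of an all-yes tuple, a contradiction. Hence ``every coordinate is a yes-instance'' is equivalent to ``the output lands in $A_t$'', and in particular $x\in N_n$ iff $f_t(x,x,\dots,x)\notin A_t$. The snag is that $|A_t|$ can be as large as $2^{m+1}$, so $A_t$ cannot simply be stored as advice, nor is membership in it obviously $\NP$-certifiable.

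\textbf{The information-theoretic lemma --- the hard part.} This is where I expect the real obstacle to be; it is the content of Drucker's ``disguising distribution'' machinery. Choosing $t$ to be a sufficiently large polynomial in $m$, the map $f_t$ compresses $t$ instances into only $m = p(n)$ output bits, so the average over coordinates of the information the output reveals about that coordinate (given the earlier ones) is at most $m/t$, negligibly small. Feeding $f_t$ a tuple one of whose coordinates is a ``test'' instance $x$ while the remaining coordinates are i.i.d.\ samples from a well-chosen distribution $\mathcal D$ over $\{0,1\}^n$, one then shows that there is a choice of $\mathcal D$ for which the output distribution behaves essentially the same for every yes-instance $x$ but is provably pushed outside the ``all-yes image'' region $A_t$ when $x$ is a no-instance. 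The payoff is a polynomially bounded object --- a succinct description of $\mathcal D$ together with a polynomial-size family of reference outputs --- that separates $Y_n$ from $N_n$; producing this object, and controlling all the approximation errors in the averaging and concentration steps, is the genuinely difficult part.

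\textbf{Assembling the advice and verifier.} Finally I would place the object from the previous step into the length-$n$ advice. On input $x$, the verifier nondeterministically guesses the disguising coordinates and any auxiliary randomness (of size polynomial in $n$), evaluates the corresponding image under $f_t$, and accepts iff that image, tested against the stored reference outputs, witnesses that $x$ cannot be a yes-instance; completeness and soundness of the test are exactly what the lemma of the previous step delivers. This yields $\overline L\in\NP/\mathsf{poly}$ and hence $L\in\coNPpoly$, as claimed.
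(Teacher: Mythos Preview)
The paper gives no proof of this theorem; it is quoted in the preliminaries with a citation to \cite{Drucker12} and used as a black box, so there is nothing in the paper to compare your attempt against.

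As a sketch of Drucker's argument, your steps 1--3 are a reasonable high-level map, and you are right that step~3 carries essentially all the content. Step~4, however, has a real gap. You let the $\NP/\mathsf{poly}$ verifier for $\overline L$ \emph{nondeterministically guess} the disguising coordinates and accept when the resulting image falls outside some stored reference set $S\subseteq A_t$. Completeness is fine (any tuple containing a no-instance maps outside $A_t\supseteq S$), but soundness fails: for a yes-instance $x$ the verifier is free to guess a disguise that itself contains a no-instance, whereupon the image also lands outside $A_t$ and the test wrongly accepts. You cannot restrict the guessed disguise to yes-instances without already deciding $L$. The distributional guarantee you extract in step~3 (``the output distribution behaves essentially the same for every yes-instance'') is an average-case statement over $\mathcal D$ and does not control adversarial nondeterministic guesses. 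In Drucker's actual proof the advice is a \emph{fixed} polynomial-size collection of instances, not a distribution for the verifier to sample from, and producing that fixed collection with worst-case guarantees is precisely the hard part you defer; it is inseparable from step~3 rather than a packaging afterthought. (Relatedly, ``a succinct description of $\mathcal D$'' need not exist: a general distribution on $\{0,1\}^n$ has exponential description size.)
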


\begin{theorem}
Assume that an $\NP$-hard language $L$ AND-cross-composes to a parameterized language $Q.$ Then $Q$ does not admit a polynomial compression, unless $\NP\subseteq\coNPpoly.$
\end{theorem}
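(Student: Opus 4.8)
The plan is to apply the cross-composition machinery recalled in Section~\ref{s:kernelRefutePrelim}. Concretely, I would exhibit an AND-cross-composition of an $\NP$-hard language into the parameterized language $Q$ consisting of the yes-instances of \MLBC with parameter $\tw(G)+L$; by the cross-composition theorem stated at the end of Section~\ref{s:kernelRefutePrelim} this rules out a polynomial compression, and hence a polynomial kernel, unless $\NP\subseteq\coNPpoly$. As the source language I take $3$-\textsc{Sat}. For the polynomial equivalence relation $\mathcal{R}$ (Definition~\ref{d:polyRelation}) I declare two formulas equivalent when they have the same number of variables and the same number of clauses (all malformed strings going to one junk class); this has polynomially many classes on $\Sigma^{\le n}$ and lets me assume that the input formulas $\varphi_1,\dots,\varphi_t$ all share these two quantities.

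The heart of the argument is a \emph{rigid} polynomial reduction $R$ from $3$-\textsc{Sat} to \MLBC with the following two properties: for every formula $\varphi$ the instance $R(\varphi)=(H,s,t,L,b)$ satisfies (i) the minimum $L$-cut of $H$ is \emph{at least} $b$, regardless of $\varphi$, and (ii) the minimum $L$-cut of $H$ is at most $b$ (hence exactly $b$) if and only if $\varphi$ is satisfiable. Moreover $L$ and $b$ depend only on the number of clauses, so all formulas of one $\mathcal{R}$-class are sent to instances sharing the same $L$ and $b$. Property~(i) is what defeats the usual budget trading in cut compositions and would be obtained structurally: the construction contains, for each clause, a dedicated $s$--$t$ path of length at most $L$, these paths being pairwise edge-disjoint, so that any $L$-cut must delete at least one edge of each of the $b$ clause-paths. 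Which edge of a clause-path is deleted encodes the literal satisfying that clause, and variable gadgets enforce consistency, so that $b$ deletions suffice exactly when a satisfying assignment exists; this is a mild augmentation of the known $\NP$-hardness reduction for \MLBC~\cite{itai}.

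Given $\varphi_1,\dots,\varphi_t$ from one class, the cross-composition algorithm $\mathcal{A}$ first computes $R(\varphi_i)=(H_i,s,t,L,b)$ and then forms their \emph{parallel composition} $G$: it identifies all the sources to a single vertex $s$, all the sinks to a single vertex $t$, and keeps the $H_i$ otherwise vertex-disjoint, outputting the instance $(G,s,t,L,tb)$. Since the $H_i$ meet only in $\{s,t\}$ and have disjoint edge sets, every $s$--$t$ path of $G$ lies inside a single $H_i$; hence for any $F\subseteq E(G)$ the graph $G-F$ has $s$--$t$ distance at least $L+1$ if and only if each $F\cap E(H_i)$ is an $L$-cut of $H_i$, and therefore the minimum $L$-cut of $G$ equals $\sum_i \mu_i$, where $\mu_i$ is the minimum $L$-cut of $H_i$. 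By~(i) each $\mu_i\ge b$, so $\sum_i\mu_i\ge tb$ with equality if and only if every $\mu_i=b$, which by~(ii) happens if and only if every $\varphi_i$ is satisfiable. Thus $(G,s,t,L,tb)\in Q$ if and only if all $\varphi_i$ are satisfiable, which is exactly condition~(2) of Definition~\ref{d:crossComposition}.

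It remains to bound the parameter. Building a tree decomposition of $G$ from decompositions of the $H_i$ by inserting $\{s,t\}$ into every bag and joining everything under a root bag $\{s,t\}$ shows $\tw(G)\le \max_i\tw(H_i)+2$; as each $H_i$ has polynomial size we get $\tw(H_i)\le|V(H_i)|-1=\mathrm{poly}(\max_i|\varphi_i|)$, and $L$ is likewise polynomial in a single formula's size and unchanged by the composition. Hence the parameter $\tw(G)+L$ is bounded by $p(\max_i|\varphi_i|,\log t)$ for a suitable polynomial $p$, giving condition~(1) of Definition~\ref{d:crossComposition}; note that the budget $tb$, which is not part of the parameter, is free to grow with $t$. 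Everything runs in time polynomial in $\sum_i|\varphi_i|$, so $\mathcal{A}$ is a valid AND-cross-composition and the theorem follows. The main obstacle is establishing the rigid reduction $R$ of the second paragraph---in particular the structural lower bound~(i) that forces every instance, satisfiable or not, to have minimum $L$-cut at least $b$---since without it the additive budget $tb$ could be met by over-spending on some components and under-spending on others, and the composition would compute an unwanted threshold on $\sum_i\mu_i$ rather than a logical AND.
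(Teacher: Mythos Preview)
You have misidentified the statement. The theorem you were asked to prove is the \emph{general} cross-composition theorem from the preliminaries: ``if an $\NP$-hard language AND-cross-composes into $Q$, then $Q$ has no polynomial compression unless $\NP\subseteq\coNPpoly$.'' This is a framework result quoted from the literature (Chapter~15 of~\cite{CFKLMPPS-FPT}, building on Drucker~\cite{Drucker12}); the paper does not prove it and does not intend to. A proof would have to go through AND-distillation and Drucker's information-theoretic argument, and would make no reference to \MLBC at all. Your proposal instead \emph{applies} this very theorem---you even cite it as ``the cross-composition theorem stated at the end of Section~\ref{s:kernelRefutePrelim}''---to establish the paper's Theorem~\ref{thm:refuteKernel}. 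That is a different statement.

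If we read your write-up as an attempted proof of Theorem~\ref{thm:refuteKernel}, it is in the right spirit but takes a harder road than the paper and leaves a real gap. The paper composes from the \emph{gapped} decision version of \MLBC itself (with constant $L=4$), using the $\NP$-hardness of the gap variant from~\cite{kolman}; the equivalence relation fixes $|V(G)|$ and the budget $K$, and the composition is the obvious source/sink identification. The gap between $K$ and $1.13K$ is exactly the ``rigidity'' you need: with $t$ instances and total budget $tK$, no instance can exceed its share since every instance already needs at least $K$ (otherwise some instance would need at most $K$ while being a no-instance requiring $\ge 1.13K$). You instead start from $3$-\textsc{Sat} and posit a reduction $R$ with a structural lower bound~(i) guaranteeing that every produced instance has minimum $L$-cut at least $b$. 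You correctly flag this as ``the main obstacle,'' and indeed your sketch does not deliver it: the standard Itai--Perl--Shiloach style reductions do not have this always-at-least-$b$ property, and the clause-path/variable-gadget outline you give does not control what happens on unsatisfiable formulas. Absent that lower bound, the parallel composition degenerates into a sum-threshold rather than an AND, exactly as you note in your last sentence. The paper sidesteps this entirely by choosing a source problem that already carries the needed gap.
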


\section{Minimal Length Bounded Multi-cuts}
In this section we give a more detailed study of the length constraints for the length-bounded multi-cut and the triangle inequalities. From this we derive Lemma~\ref{thm:edge-disjoint-MLBMC} for merging solutions for edge-disjoint graphs.

\paragraph{The triangle inequalities}
Note that the solution for {\sc MLBMC} problem has to satisfy the triangle inequalities with respect to its instance. 
This means that for any three terminals $s,t,u\in S$ and the distance function $dist$ it holds that $dist(s,u) + dist(u,t) \ge dist(s,t) \ge a_{s,t}$.
Thus, we can restrict instances of {\sc MLBMC} problem only to those satisfying these triangle inequalities.

\begin{definition}[Length constraints]
Let $G = (V,E)$ be a graph, $S\subset V$ and let $k = |S|$. We call a vector $\ba = (a_{s_1, s_2},\dots,a_{s_{k - 1}, s_k})$ a length constraint if for every $s,t,u \in S$ it holds that $a_{s,u} + a_{u,t} \ge a_{s,t}$.
\end{definition}

For our approach it is important to see the structure of the solution on a graph composed from two edge disjoint graphs. 
%The proof of the following lemma can be found in the Appendix.

\begin{lemma}\label{thm:edge-disjoint-MLBMC}
Let $G_1 = (V_1,E_1), G_2 = (V_2,E_2)$ be edge disjoint graphs.
Then for the graph $G = G_1\cup G_2$ and $S = V_1\cap V_2$ and an arbitrary length constraints $\ba \in\N^{\binom{|S|}{2}}$ it holds that	%% satisfying triangle
the minimum length bounded $(S,\ba)$ multi-cut $F$ for $G$ is a disjoint union of the $(S,\ba)$ multi-cuts $F_1$ and $F_2$ for $G_1$ and $G_2$.
\end{lemma}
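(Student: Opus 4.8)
The plan is to exploit that $E_1\cap E_2=\emptyset$, so that every $F\subseteq E$ decomposes uniquely as $F=F_1\disjcup F_2$ with $F_i=F\cap E_i$, and to prove the equivalence: $F$ is an $(S,\ba)$ multi-cut of $G$ \emph{if and only if} $F_1$ is an $(S,\ba)$ multi-cut of $G_1$ and $F_2$ is an $(S,\ba)$ multi-cut of $G_2$. Granting this equivalence, the lemma follows by a short counting argument. If $F$ is a minimum multi-cut of $G$, then $F_1,F_2$ are multi-cuts of $G_1,G_2$, hence $|F_i|\ge m_i$, where $m_i$ denotes the size of a minimum multi-cut of $G_i$. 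Conversely, the union of a minimum multi-cut of $G_1$ and a minimum multi-cut of $G_2$ is a multi-cut of $G$, so $|F|\le m_1+m_2$. Combining, $m_1+m_2\ge|F|=|F_1|+|F_2|\ge m_1+m_2$, forcing $|F_1|=m_1$ and $|F_2|=m_2$; thus $F=F_1\disjcup F_2$ with both parts minimum multi-cuts, which is exactly the claim.

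The key tool for the equivalence is a path-splitting observation. Since $S=V_1\cap V_2$, any $s_i$--$s_j$ path $P$ in $G$ breaks at its $S$-vertices $u_0=s_i,u_1,\dots,u_m=s_j$ (listed in the order they appear along $P$) into segments $Q_1,\dots,Q_m$, where $Q_\ell$ runs from $u_{\ell-1}$ to $u_\ell$ and has no internal vertex in $S$, and I claim each $Q_\ell$ lies entirely inside $G_1$ or entirely inside $G_2$. Indeed, if $Q_\ell$ used both an $E_1$-edge and an $E_2$-edge then, being a path, it would contain two consecutive edges, one from each $E_i$; their common vertex would then lie in both $V_1$ and $V_2$, i.e.\ in $S$, and would be internal to $Q_\ell$---a contradiction. (Single-edge segments are handled directly, as their endpoints already lie in $S\subseteq V_1\cap V_2$.)

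The ``only if'' direction is routine: any path of $G_i\setminus F_i$ is also a path of $G\setminus F$ (since $G_i\subseteq G$ and $F\cap E_i=F_i$), so $\operatorname{dist}_{G_i\setminus F_i}(s,t)\ge\operatorname{dist}_{G\setminus F}(s,t)\ge a_{s,t}$ for all distinct $s,t\in S$, whence $F_i$ is an $(S,\ba)$ multi-cut of $G_i$; note the triangle inequalities are not needed here. The ``if'' direction is the main step, and is where the triangle inequalities enter. Given multi-cuts $F_1,F_2$ of $G_1,G_2$, let $P$ be a shortest $s_i$--$s_j$ path in $G\setminus(F_1\cup F_2)$ and split it as above into $Q_1,\dots,Q_m$. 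Each $Q_\ell$ avoids $F_1\cup F_2$ and lies in some $G_r$, hence is a path of $G_r\setminus F_r$ between the distinct terminals $u_{\ell-1},u_\ell\in S$, so $|Q_\ell|\ge a_{u_{\ell-1},u_\ell}$. Summing and applying the triangle inequality for $\ba$ repeatedly gives
\[
|P|=\sum_{\ell=1}^m |Q_\ell|\ \ge\ \sum_{\ell=1}^m a_{u_{\ell-1},u_\ell}\ \ge\ a_{u_0,u_m}=a_{s_i,s_j},
\]
so $F_1\cup F_2$ is an $(S,\ba)$ multi-cut of $G$. This establishes the equivalence and hence the lemma.

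I expect the only genuine obstacle to be making the path-splitting observation fully rigorous---in particular the bookkeeping of which endpoints lie in which $V_i$ and the degenerate single-edge case; everything else is bookkeeping, and the triangle inequalities are used exactly once, in the telescoping sum displayed above.
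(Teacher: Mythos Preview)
Your proof is correct and follows essentially the same approach as the paper's: the paper likewise shows that restricting a cut of $G$ to $E_i$ yields a cut of $G_i$, and that $F_1\cup F_2$ is a valid cut by decomposing an $s$--$t$ path at its $S$-vertices and invoking the triangle inequality. The only cosmetic difference is that the paper phrases the ``if'' direction as an induction on $|V(P)\cap S|$ (peeling off one intermediate terminal at a time) whereas you split into all segments at once and sum; your version is in fact more explicit about why each segment lies entirely in one $G_i$, a point the paper leaves implicit.
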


\begin{proof}
First we prove that there cannot be smaller solution than $F_1\cup F_2$.
To see this observe that for every $(S,\ba)$ cut $F'$ on $G$ it holds that $F'\cap E_1$ is an $(S,\ba)$ cut on $G_1$ (and vice versa for $G_2$).
Hence if $F'$ would be a cut of smaller size then $F$, we would get a contradiction with the minimality of choice of $F_1$ and $F_2$, because we would have $|F'| < |F| = |F_1| + |F_2|$.

Now we prove that $F_1\cup F_2$ is a valid solution.
To see this we prove that every path between two terminals is not short. We prove that there cannot be such $P$ by an induction on number of $h := |V(P) \cap S|$.
If $h = 2$ then because $G_1$ and $G_2$ are edge disjoint, we may (by symmetry) assume that $P\subset G_1$, a contradiction with the choice of $F_1$.
If $h > 2$ then there is a vertex $u\in S\setminus\{s,t\}$ such that the path $P$ is composed from two segments $P_1$ and $P_2$, where $P_1$ is a path between $s$ and $u$ and $P_2$ is a path between $u$ and $t$. And so we have $|P| = |P_1| + |P_2| \ge a_{s,u} + a_{u,t} \ge a_{s,t}$, what was to be demonstrated.
\end{proof}

\section{Restricted bounded multi-cut}

In this section we present our approach to the $L$-bounded cut for the graphs of bounded tree-width. We present our algorithm together with some remarks on our results.

Recall that we use dynamic programming techniques on a tree decom\-position of an input graph. 
First we want to root the decomposition in a node containing both source and sink of the $L$-cut problem. 
This can be achieved by adding the source to all nodes on the unique path in the decomposition tree between any node containing the source and any node containing the sink. 
Note that this may add at most $1$ to the width of the decomposition.

\paragraph{Length vectors and tables}
As it was mentioned in the previous section, we solve the $L$-cut by reducing it to simple instances of generalized {\sc MLBMC} problem. We begin with a mapping $a: S\times S \rightarrow \N$ with meaning $a(s,t) = l_{s,t}$.
For simplicity we represent the mapping $a$ by a vector, calling it a {\it length vector} $\ba$ and relax it for a node $X$ $\ba = (a_{x_1,x_2},\dots,a_{x_{k-1},x_k}) \in \N^{\binom{k}{2}}$, where $k = |X|$ and $X = \{x_1,\dots,x_k\}$. 
We reduce the problem to the $\ba$-bounded multi-cut for $k$ terminals, where $k = tw(G) + 2$ (the additional two is for changing the decomposition). 
Let us introduce a relation on length vectors $\ba, \bb \in \N^{\binom{k}{2}}$ on $X$ of the same size. We write $\ba \preceq \bb$, if $a_{x_i,x_j} \le b_{x_i,x_j}$ for all $1\le i<j\le |X|$.

Let the set of vertices $X = \{x_1,\dots, x_k\}$, $\ba$ be a length vector, let $I\subset [k]$ and let $Y = \{x_i\in X\colon i\in I\}$.  
By $\ba|_Y$ we denote the length vector $\ba$ containing $a_{x_i,x_j}$ if and only if both $i\in I$ and $j\in I$ (in an appropriate order) -- in this case we say $\ba|_Y$ is $\ba$ {\it contracted} on the set $Y$.

Recall that for each node $X$ we have defined the auxiliary graph $G_X$ (see Section~\ref{sec:preliminaries} for the definition). With a node $X$ we associate the table $Tab_X$. The table entry for constraints $\ba = (a_{1,2},\dots,a_{k-1,k})$ of $Tab_X$ (denoted by $Tab_X[\ba]$) for the node $X = \{x_1,\dots,x_k\}$ contains the size of the $\ba$-bounded multi-cut for the set $X$ in the graph $G_X.$ Note that for two length vectors $\ba\preceq\bb$ it holds that $Tab_X[\ba] \le Tab_X[\bb]$.

%\paragraph{Compatible Pairs of Length vectors}
%We use the {\sc MLBMC} problem for merging solutions for smaller graphs to get a solution for a larger graph. Let $G_1 = (V_1,E_1), G_2 = (V_2,E_2)$ and $G = (V,E)$ be graphs with $V = V_1\cup V_2, E = E_1\cup E_2$ and $E_1\cap E_2 = \emptyset$. Let $X = V_1\cap V_2$. And let $(G, X, \ba)$ be an instance of {\sc MLBMC} we say that pair $(\bb,\bc)$ of length vectors in compatible with $\ba$ if additional triangle inequalities $\ba_{s,t} \le \bb_{s,u} + \bc_{u,t}$ are fulfilled for each $s,t,u \in X$.

%\todo{dodelat obrazek zkratek a napsat neco ke zkratkam a forgetum?}

\subsection{Node Lemmas}

The leaf nodes are the only nodes bearing some edges. We use an exhaustive search procedure for building tables for these nodes. For this we need to compute the lengths of the shortest paths between all the vertices of the leaf node, for which we use the well known procedure due to Floyd and Warshall~\cite{floyd,warshall}:

\begin{proposition}[\cite{floyd,warshall}]\label{thm:floyd-warshall}
Let $G$ be a graph with nonnegative length $f: G(E) \rightarrow \N$. It is possible to compute the table of lengths of the shortest paths between any pair $u, v \in V(G)$ with respect to $f$ in time $O(|V(G)|^3)$.
\end{proposition}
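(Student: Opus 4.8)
The plan is to give the standard dynamic programming argument of Floyd and Warshall. First I would fix an arbitrary enumeration $v_1,\dots,v_n$ of $V(G)$, where $n=|V(G)|$, and for every ordered pair $i,j\in[n]$ and every $k\in\{0,1,\dots,n\}$ define $d_k(i,j)$ to be the minimum length (with respect to $f$) of a walk from $v_i$ to $v_j$ all of whose internal vertices lie in $\{v_1,\dots,v_k\}$, with the conventions $d_k(i,i)=0$ and $d_k(i,j)=\infty$ when no such walk exists. The table we must output is exactly $\bigl(d_n(i,j)\bigr)_{i,j}$, since for $k=n$ there is no restriction on internal vertices.

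Next I would establish the recurrence. The base case is $d_0(i,j)=f(v_iv_j)$ if $v_iv_j\in E(G)$, $d_0(i,i)=0$, and $d_0(i,j)=\infty$ otherwise, which is computed in $O(n^2)$ time. For the inductive step I would prove $d_k(i,j)=\min\{\,d_{k-1}(i,j),\ d_{k-1}(i,k)+d_{k-1}(k,j)\,\}$. The inequality $\le$ is immediate because each argument of the minimum is the length of an admissible walk for $d_k(i,j)$. For $\ge$ I would take a walk $W$ of length $d_k(i,j)$ realizing the minimum; because $f$ is nonnegative, deleting any closed subwalk does not increase the length, so $W$ may be assumed to be a path, hence $v_k$ occurs on $W$ at most once. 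If $v_k$ is not an internal vertex of $W$ then $W$ witnesses $d_{k-1}(i,j)$; otherwise splitting $W$ at $v_k$ yields a walk from $v_i$ to $v_k$ and a walk from $v_k$ to $v_j$, each with internal vertices in $\{v_1,\dots,v_{k-1}\}$, which gives $d_k(i,j)\ge d_{k-1}(i,k)+d_{k-1}(k,j)$.

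Finally I would bound the running time. For each $k=1,\dots,n$ every entry $d_k(i,j)$ is obtained from the previous table in constant time, so each level costs $O(n^2)$ and the total is $O(n^3)$; one single $n\times n$ array suffices, updated in place, since $d_k(i,k)=d_{k-1}(i,k)$ and $d_k(k,j)=d_{k-1}(k,j)$ imply that the $k$-th row and column are not corrupted during the $k$-th sweep. There is no genuinely hard step here; the only points requiring a little care are the reduction of a minimum-length walk to a path in the $\ge$ direction (so that $v_k$ appears at most once) and the justification of the in-place update, after which correctness follows by a routine induction on $k$.
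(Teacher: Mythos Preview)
Your argument is the standard Floyd--Warshall dynamic programming proof and is correct as written. The paper itself provides no proof of this proposition at all---it is stated as a cited classical result from \cite{floyd,warshall} and used as a black box---so there is nothing to compare; your write-up simply supplies what the paper omits.
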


%We use the Propositon~\ref{thm:floyd-warshall} to prove the following lemma (the proof can be found in Appendix).

\begin{lemma}[Leaf Nodes]\label{thm:leafnodes}
For all $L$-limited length vectors and a leaf node $X$ the table $Tab_X$ of sizes of minimum length-bounded multi-cuts can be computed in time $O(L^{k^2}\cdot 2^{k^2}\cdot k^3)$, where $k = |X|$.
\end{lemma}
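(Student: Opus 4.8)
# Proof Proposal for Lemma (Leaf Nodes)

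\textbf{Overview of the approach.} A leaf node $X$ with $|X| = k$ carries an auxiliary graph $G_X = (X, E_X)$ whose vertex set is exactly $X$ (so $G_X$ has at most $k$ vertices and at most $\binom{k}{2}$ edges). The plan is to fill $Tab_X[\ba]$ for every $L$-limited length vector $\ba$ by brute force over all candidate cuts $F \subseteq E_X$: for each such $F$, compute the all-pairs shortest path distances in $G_X \setminus F$, check whether the resulting distances satisfy all the constraints in $\ba$, and if so record $|F|$ as a candidate value for $Tab_X[\ba]$. Taking the minimum over all feasible $F$ yields the table entry; if no $F$ works, the entry is $+\infty$ (or undefined). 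The heart of the argument is just a careful count of the work done.

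\textbf{Key steps.} First I would bound the number of subsets $F \subseteq E_X$: since $|E_X| \le \binom{k}{2} < k^2$, there are at most $2^{k^2}$ such subsets. Second, for a fixed $F$ I would invoke Proposition~\ref{thm:floyd-warshall} (Floyd--Warshall) on $G_X \setminus F$ with unit edge lengths to obtain, in time $O(k^3)$, the table of pairwise distances $\mathrm{dist}_{G_X\setminus F}(x_i, x_j)$ for all pairs. Third, I would enumerate all $L$-limited length vectors $\ba$: each of the $\binom{k}{2}$ coordinates ranges over $\{1, \dots, L\}$, so there are at most $L^{\binom{k}{2}} \le L^{k^2}$ of them. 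For each pair $(F, \ba)$, testing feasibility (i.e.\ whether $\mathrm{dist}_{G_X\setminus F}(x_i,x_j) \ge a_{x_i,x_j}$ for all $i<j$) costs $O(k^2)$, which is dominated by the $O(k^3)$ Floyd--Warshall cost already paid per $F$. Putting this together: for each of the $\le 2^{k^2}$ cuts we spend $O(k^3)$ on shortest paths and then $O(L^{k^2}\cdot k^2)$ on updating the relevant table entries, so the total is $O(2^{k^2}\cdot(k^3 + L^{k^2}\cdot k^2)) = O(L^{k^2}\cdot 2^{k^2}\cdot k^3)$, as claimed.

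\textbf{Correctness of the values.} I would then argue that this procedure computes exactly $Tab_X[\ba]$ as defined: by definition $Tab_X[\ba]$ is the minimum size of an $\ba$-bounded multi-cut for $X$ in $G_X$, and a set $F \subseteq E_X$ is an $\ba$-bounded multi-cut precisely when in $(X, E_X \setminus F)$ every pair $x_i, x_j$ has distance at least $a_{x_i,x_j}$ — which is exactly the feasibility test performed. Since the algorithm enumerates \emph{all} subsets $F$ and takes the minimum cardinality among feasible ones, it returns the correct value (and correctly reports infeasibility when no $F$ works). One small point worth making explicit is that, because $G_X$ has no vertices outside $X$, there is no distinction between "shortest path using only vertices of $X$" and "shortest path in $G_X$," so Floyd--Warshall on $G_X \setminus F$ gives the genuinely relevant distances.

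\textbf{Main obstacle.} There is no deep obstacle here; the lemma is essentially a bookkeeping exercise. The only thing requiring a little care is the running-time accounting — in particular making sure the enumeration of length vectors is charged correctly (it should be counted once per cut, or equivalently one sweeps over all $(F,\ba)$ pairs) so that the bound comes out as the stated product $L^{k^2}\cdot 2^{k^2}\cdot k^3$ rather than something larger; and noting that $\binom{k}{2} \le k^2$ is used twice, once for the number of edges and once for the number of coordinates of $\ba$. It is also worth confirming that "$L$-limited" indeed restricts each coordinate to $\{1,\dots,L\}$, which is what the $L^{k^2}$ factor relies on.
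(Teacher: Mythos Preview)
Your proposal is correct and follows essentially the same approach as the paper: enumerate all $O(2^{k^2})$ edge subsets of $G_X$, run Floyd--Warshall in $O(k^3)$ on each, and check feasibility against every $L$-limited length vector. The only cosmetic difference is the loop order---the paper fixes $\ba$ first and then iterates over subgraphs, whereas you compute distances once per $F$ and then sweep the length vectors---but the argument and the resulting bound are the same.
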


\begin{proof}
Fix one $L$-limited length vector $\ba$. We run the Floyd-Warshall algorithm (stated as Proposition~\ref{thm:floyd-warshall}) for every possible subgraph of $G_X$. As $|E(G_X)| \le \binom{k}{2}$ there are $O(2^{k^2})$ such subgraphs. This gives us a running time $O(2^{k^2}\cdot k^3)$ for a single length vector $\ba$ -- we check all $O(k^2)$ length constraints if each of them is satisfied and set $$Tab_X[\ba] := \min_{F\subseteq E(G_X)\colon F \text{ satisfies } \ba} (|E(G_X) \setminus F|).$$

Finally there are $O(L^{k^2})$ $L$-bounded length vectors, this gives our result.
\end{proof}

%\begin{proof}
%Fix one $L$-limited length vector $\ba$. We run the Floyd-Warshall algorithm (stated as Proposition~\ref{thm:floyd-warshall}) for every possible subgraph of $G_X$. As $|E(G_X)| \le \binom{k}{2}$ there are $O(2^{k^2})$ such subgraphs. This gives us a running time $O(2^{k^2}\cdot k^3)$ for a single length vector $\ba$ -- we check all $O(k^2)$ length constraints if each of them is satisfied and set $$Tab_X[\ba] := \min_{F\subseteq E(G_X)\colon F \text{ satisfies } \ba} (|E(G_X) \setminus F|).$$

%Finally there are $O(L^{k^2})$ $L$-bounded length vectors, this gives our result.
%\end{proof}

We now use Lemma~\ref{thm:edge-disjoint-MLBMC} to prove time complexity of finding a dynamic programming table for join nodes from the table of its children.

%\paragraph{Proof remarks}
%Let us remark -- it is not obvious from the first reading of the proof that we do not throw away some clever solution. For example there may be two solutions of the very same value which are combinatorially distinct (do not cut the same edges). We want to note that if we have a power to recognize these from each other -- they are close in a meaning choosing one partial solution may cause some vertices to be further apart then while it is not so when choosing the other solution. This cannot be because if there are distinct solutions of same size they either assure same length of paths between every pair of terminals (then we cannot distinguish them bud for the solution they look the same) or there are pair where the partial solutions assure distinct length of paths between some of pairs so typically one of the solutions became compatible while the other do not.

\begin{lemma}[Join Nodes]\label{thm:join_lemma}
Let $X$ be a join node with children $Y$ and $Z$, let $L$ be the limit on length vectors components and let $k = |X|$. Then the table $Tab_X$ can be computed in time $O(L^{k^2})$ from the table $Tab_Y$ and $Tab_Z$.
\end{lemma}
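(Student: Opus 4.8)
The plan is to reconstruct $Tab_X$ by combining, for each length vector $\mathbf{a}$ on $X$, the best compatible entries from $Tab_Y$ and $Tab_Z$. The key structural fact is that the auxiliary graph $G_X$ decomposes as $G_Y \cup G_Z$, where $G_Y$ and $G_Z$ are edge-disjoint (by the way the sets $E_i$ were assigned to leaves, every edge belongs to exactly one leaf and hence propagates up into exactly one of the two children's auxiliary graphs), and moreover $V(G_Y)\cap V(G_Z) = X$ since $X = Y = Z$ as vertex sets at a join node. So Lemma~\ref{thm:edge-disjoint-MLBMC} applies with $G_1 = G_Y$, $G_2 = G_Z$, $S = X$, and tells us that a minimum $(X,\mathbf{a})$-multi-cut of $G_X$ splits as a disjoint union of an $(X,\mathbf{a})$-multi-cut of $G_Y$ and one of $G_Z$; in particular its size is $Tab_Y[\mathbf{a}] + Tab_Z[\mathbf{a}]$.

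First I would argue the lower bound: any $(X,\mathbf{a})$-multi-cut $F$ of $G_X$ restricts to $F\cap E(G_Y)$, which is an $(X,\mathbf{a})$-multi-cut of $G_Y$ (removing more edges only lengthens paths, and $G_Y$ is a subgraph), hence has size at least $Tab_Y[\mathbf{a}]$; symmetrically for $Z$; and since $E(G_Y)$ and $E(G_Z)$ are disjoint, $|F| \ge Tab_Y[\mathbf{a}] + Tab_Z[\mathbf{a}]$. Second, for the upper bound I would take optimal cuts $F_Y$ and $F_Z$ witnessing $Tab_Y[\mathbf{a}]$ and $Tab_Z[\mathbf{a}]$ and invoke the ``valid solution'' half of Lemma~\ref{thm:edge-disjoint-MLBMC} to see that $F_Y \disjcup F_Z$ is an $(X,\mathbf{a})$-multi-cut of $G_X$ — the induction on the number of terminals on a path there uses exactly the triangle inequality satisfied by length constraints, which $\mathbf{a}$ is assumed to satisfy. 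Hence $Tab_X[\mathbf{a}] = Tab_Y[\mathbf{a}] + Tab_Z[\mathbf{a}]$, a single addition per entry.

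For the running time, there are $O(L^{\binom{k}{2}}) = O(L^{k^2})$ length vectors $\mathbf{a}$ to consider (each of the $\binom{k}{2}$ coordinates ranges over $\{0,1,\dots,L\}$, and since we only care about $L$-limited vectors this is the whole table), and for each we do a constant-time lookup in $Tab_Y$ and $Tab_Z$ plus one addition. This gives $O(L^{k^2})$ total, matching the claim. I would remark that, unlike the statement of Lemma~\ref{thm:edge-disjoint-MLBMC}, here we never need to \emph{contract} $\mathbf{a}$: the terminal set is the same at $X$, $Y$, and $Z$, so the vectors line up coordinate-for-coordinate.

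The only subtle point — and the step I would be most careful about — is verifying that the hypotheses of Lemma~\ref{thm:edge-disjoint-MLBMC} genuinely hold for the auxiliary graphs at a join node, namely that $V(G_Y)\cap V(G_Z)$ is \emph{exactly} $X$ (not larger) and that $E(G_Y)\cap E(G_Z)=\emptyset$. The first follows from the connectivity condition of the tree decomposition: a vertex appearing in both subtrees must appear in every bag on the path between them, in particular in $X$ which separates them; the second follows from the fixed assignment of each edge to a unique leaf in the modified decomposition. I would state these two facts as an explicit observation before applying the lemma, since the whole additivity of the table rests on them.
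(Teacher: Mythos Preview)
Your proof is correct and follows exactly the paper's approach: apply Lemma~\ref{thm:edge-disjoint-MLBMC} with $G_1=G_Y$, $G_2=G_Z$, $S=X$ to conclude $Tab_X[\ba]=Tab_Y[\ba]+Tab_Z[\ba]$, then count the $O(L^{k^2})$ table entries. If anything, you supply more justification than the paper does, since you spell out why $V(G_Y)\cap V(G_Z)=X$ and $E(G_Y)\cap E(G_Z)=\emptyset$ hold, whereas the paper simply asserts these facts.
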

\begin{proof}
%Recall that graphs $G_Y$ and $G_Z$ assigned to nodes $Y$ and $Z$ are edge disjoint by their definition. Fix $L$-limited length vector $\ba$. Note that we cannot assign $Tab_X[\bl] := Tab_Y[\bl] + Tab_Z[\bl]$ as for the reason that this can produce a shortcut! The path can be split into two (edge disjoint) subpaths through some vertex $u\in X$ -- one subpath in the graph $G_Y$ and the other in the graph $G_Z$.
Recall that graphs $G_Y$ and $G_Z$ are edge disjoint and that we store sizes of $\ba$-bounded multi-cuts. Note also that $X = V(G_Y)\cap V(G_Z)$ and so we can apply Lemma~\ref{thm:edge-disjoint-MLBMC} and set $Tab_X[\ba] := Tab_Y[\ba] + Tab_Z[\ba],$ for each $\ba$ satisfying the triangle inequalities. 
%So we check all $L$-limited length vectors $\bb, \bc \succeq\ba$, if for all $x,y,z\in X$ (distinct) $b_{x,z} + c_{z,y} \ge a_{x,y}$ and $c_{x,z} + b_{z,y} \ge a_{x,y}$, we mark the pair $(\bb,\bc)$ as compatible (with vector $\ba$), we mark the pair as non-compatible otherwise. Finally we set $$Tab_X[\ba] = \min_{(\bb,\bc)\text{ compatible pair}}(Tab_Y[\bb] + Tab_Z[\bc]).$$

%For the complexity issue, there are at most $(L^{k^2})$ of (all) $L$-limited length vectors. We need to check $O(k^3)$ length constraints for each $\ba$.
As there are $O(L^{k^2})$ entries in the table $Tab_X$ we have the complexity we wanted to prove.
\end{proof}

As the forget node expects only forgetting a vertex and thus forgetting part of the table of the child. This is the optimizing part of our algorithm. 

\begin{lemma}[Forget Nodes]\label{thm:forget_lemma}
Let $X$ be a forget node, $Y$ its child, let $L$ be the limit on length vectors components and let $k = |X|$. Then the table $Tab_X$ can be computed in time $O((L^{k^2})^2)$ from the table $Tab_Y$.
\end{lemma}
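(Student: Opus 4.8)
The plan is to handle a forget node $X$ with child $Y$ where $Y = X \cup \{v\}$ for the forgotten vertex $v$. The auxiliary graphs satisfy $G_X = G_Y$ (forgetting a vertex changes only the boundary, not the edge set or vertex set of the auxiliary graph), so an $\ba$-bounded multi-cut on $X$ in $G_X$ is exactly a set of edges that, in addition to separating the pairs in $X$ according to $\ba$, achieves \emph{some} separation of $v$ from the vertices of $X$. The key observation is that a set $F$ witnesses $Tab_X[\ba]$ if and only if there exists a length vector $\bb$ on $Y$ with $\bb|_X \succeq \ba$ such that $F$ is a $\bb$-bounded multi-cut on $Y$; taking the smallest such $F$ over all admissible $\bb$ gives the entry. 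Hence the recurrence is
\begin{equation*}
Tab_X[\ba] = \min_{\bb \,:\, \bb|_X \succeq \ba} Tab_Y[\bb],
\end{equation*}
where $\bb$ ranges over $L$-limited length vectors on $Y$ satisfying the triangle inequalities.

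First I would make precise why this recurrence is correct, in both directions. For the easy direction: if $\bb|_X \succeq \ba$ then any $\bb$-bounded multi-cut on $Y$ already forces every $s,t \in X$ to distance at least $b_{s,t} \ge a_{s,t}$ in $G_Y = G_X$, so it is an $\ba$-bounded multi-cut on $X$, giving $Tab_X[\ba] \le Tab_Y[\bb]$. For the reverse direction: take an optimal $F$ witnessing $Tab_X[\ba]$; in the graph $(V(G_X), E(G_X) \setminus F)$ let $b_{s,t}$ be the true shortest-path distance between $s$ and $t$, capped at $L$ (more precisely at the largest value the table indexes), for every pair $s,t \in Y$. This $\bb$ is automatically a valid length constraint (distances satisfy the triangle inequality), it is $L$-limited after capping, $F$ is by construction a $\bb$-bounded multi-cut on $Y$ so $Tab_Y[\bb] \le |F| = Tab_X[\ba]$, and restricting to pairs within $X$ gives $b_{s,t} \ge a_{s,t}$, i.e. $\bb|_X \succeq \ba$. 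Combining the two inequalities yields equality. One should be slightly careful that capping at $L$ is harmless: any distance that would exceed $L$ can be recorded as $L$, since the problem is $L$-limited and no constraint ever demands more than $L$; this is exactly why restricting attention to $L$-limited vectors loses nothing.

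Next I would analyze the running time. The table $Tab_Y$ has $O(L^{|Y|^2}) = O(L^{(k+1)^2})$ entries, and for each entry $Tab_X[\ba]$ of $Tab_X$ we scan all $\bb$ with $\bb|_X \succeq \ba$. A clean way to get the claimed bound is: iterate once over all $O(L^{(k+1)^2})$ length vectors $\bb$ on $Y$, compute $\bb|_X$ (trivial, just drop the $O(k)$ coordinates involving $v$), and relax $Tab_X[\bb|_X] \leftarrow \min(Tab_X[\bb|_X], Tab_Y[\bb])$. Since $|Y|^2 = (k+1)^2 \le k^2 + 3k \le 2k^2$ for $k \ge 3$ (and the small cases are trivial), $L^{(k+1)^2} \le L^{2k^2} = (L^{k^2})^2$, which matches the $O((L^{k^2})^2)$ bound in the statement. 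Initializing $Tab_X$ to $+\infty$ costs $O(L^{k^2})$, dominated by the main loop.

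The main obstacle — really the only subtle point — is the reverse direction of the correctness argument, specifically the claim that it suffices to range over $L$-limited vectors $\bb$ and that capping distances at $L$ does not affect which multi-cuts are counted. I would want to state explicitly that the table is indexed only by $L$-limited vectors, that $Tab_Y[\bb]$ for such a $\bb$ equals the minimum cut size over edge sets realizing \emph{at least} those distances, and that therefore replacing a true distance $> L$ by $L$ in the definition of $\bb$ changes nothing about the set of edge sets witnessing $Tab_Y[\bb]$. Everything else is bookkeeping: the identity $G_X = G_Y$, the triangle-inequality check, and the arithmetic $(k+1)^2 \le 2k^2$.
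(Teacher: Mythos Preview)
Your approach is essentially the same as the paper's: enumerate all $L$-limited length vectors $\bb$ on $Y$ that extend a given $\ba$ on $X$, and take the minimum of $Tab_Y[\bb]$. The paper states the recurrence with $\bb|_X = \ba$ rather than your $\bb|_X \succeq \ba$, and gives no explicit correctness argument at all; your two-direction proof of correctness is strictly more detailed than what the paper provides.

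One small internal inconsistency worth pointing out: your stated recurrence minimizes over $\bb$ with $\bb|_X \succeq \ba$, but your implementation (the single pass relaxing $Tab_X[\bb|_X]$) computes $\min_{\bb:\,\bb|_X = \ba} Tab_Y[\bb]$, which is the paper's version. The two coincide because of the monotonicity the paper already records (``for $\ba\preceq\bb$ it holds that $Tab_X[\ba]\le Tab_X[\bb]$''): given any $\bb$ with $\bb|_X \succeq \ba$, lowering the $X$-coordinates down to $\ba$ yields $\bb'\preceq\bb$ with $\bb'|_X=\ba$ and $Tab_Y[\bb']\le Tab_Y[\bb]$. You should either say this explicitly, or simply state the recurrence with equality from the start; your reverse-direction witness (true distances capped at $L$) then needs the same one-line monotonicity remark to land on an extension with $\bb|_X=\ba$ exactly.
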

\begin{proof}
Fix one length vector $\ba$ and compute the set ${\mathcal A}(\ba)$ of all $Y$-augmented length vectors. 
Formally $\bb \in {\mathcal A}(\ba)$ if $\bb$ is a length vector satisfying the triangle inequalities for $Y$ and $\bb|_X = \ba$. 
After this we set 
\[
Tab_X[\ba] := \min_{\bb\in {\mathcal A}(\ba)} Tab_Y[\bb]. 
\]

There are at most $L^{k^2}$ of $Y$-augmented length vectors for each $\ba$ and this gives the claimed time.
\end{proof}

Also the introduce node (as the counter part for the forget node) only adds coordinates to the table of its child. It does no computation as there are no edges it can decide about -- these nodes now only add isolated vertex in the graph.

\begin{lemma}[Introduce Nodes]\label{thm:introduce_lemma}
Let $X$ be an introduce node, $Y$ its child, let $L$ be the limit on length vectors components and let $k = |X|$. Then the table $Tab_X$ can be computed in time $O(L^k)$ from the table $Tab_Y$.
\end{lemma}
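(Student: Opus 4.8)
The plan is to handle the introduce node essentially as the counterpart of the forget node from Lemma~\ref{thm:forget_lemma}: since introducing a vertex $v$ adds only an isolated vertex to the auxiliary graph $G_X$ (no new edges are placed in introduce nodes after our modification, because each edge belongs to some leaf), the graph $G_X$ has exactly the same edge set as $G_Y$, and $G_Y$ sits inside $G_X$ with one extra isolated vertex. Therefore every $\ba$-bounded multi-cut on $G_X$ for the terminal set $X$ is also an $\ba|_Y$-bounded multi-cut on $G_Y$ for $Y$, and conversely, so the sizes are governed entirely by the table entries of the child.

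Concretely, first I would observe that for a length vector $\ba$ on $X=Y\cup\{v\}$, the distances involving $v$ in $G_X\setminus F$ are all infinite (since $v$ is isolated), so the constraints $a_{v,x}$ for $x\in Y$ are automatically satisfied by \emph{any} $F\subseteq E(G_X)=E(G_Y)$, and hence are vacuous; the only constraints that matter are those among vertices of $Y$, i.e.\ the contracted vector $\ba|_Y$. Thus I would set
\[
Tab_X[\ba] := Tab_Y[\ba|_Y]
\]
for every $L$-limited length vector $\ba$ on $X$ that satisfies the triangle inequalities. Correctness then follows because the set of feasible cuts $F$ for $(X,\ba)$ in $G_X$ coincides with the set of feasible cuts for $(Y,\ba|_Y)$ in $G_Y$, so the minima of $|E(G_X)\setminus F|$ and $|E(G_Y)\setminus F|$ agree (the edge sets being identical).

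For the running time: there are $O(L^{\binom{k}{2}})=O(L^{k^2})$ entries in $Tab_X$, and computing each entry requires only reading off one entry of $Tab_Y$ indexed by the contraction $\ba|_Y$, which takes time polynomial in $k$; so naively this is $O(L^{k^2})$, matching the bound the other node lemmas already pay. The claimed sharper bound $O(L^k)$ comes from noticing we do not need to iterate over all of $Tab_X$ independently: the entries of $Tab_X$ are determined by $\ba|_Y$ together with the $k-1$ newly added coordinates $a_{v,x_1},\dots,a_{v,x_{k-1}}$, and for a fixed child entry these $k-1$ coordinates range over at most $L^{k-1}$ choices, each of which (after a triangle-inequality check against the already-fixed distances) simply copies the same child value; so the per-child-entry work of writing out the corresponding block of $Tab_X$ is $O(L^{k-1})$ up to polynomial factors, and one should argue this aggregates to $O(L^k)$. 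The main obstacle I anticipate is precisely this bookkeeping—being careful that (i) the triangle-inequality filtering on the new coordinates $a_{v,\cdot}$ is done correctly and cheaply, and (ii) the accounting genuinely yields $O(L^k)$ rather than $O(L^{k^2})$; everything else is immediate from the isolated-vertex observation.
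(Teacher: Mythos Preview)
Your approach is exactly the paper's: the paper's entire proof is the single observation that the introduced vertex $x\in X\setminus Y$ is isolated in $G_X$, whence $Tab_X[\ba]:=Tab_Y[\ba|_Y]$. You supply more detail than the paper does, and your correctness argument is fine.

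Regarding the running time: your suspicion is warranted. The paper does not justify the $O(L^k)$ bound at all---it simply states the formula and stops. Your proposed aggregation does not actually yield $O(L^k)$: summing $L^{k-1}$ extensions over the $L^{\binom{k-1}{2}}$ child entries gives $L^{\binom{k-1}{2}+(k-1)}=L^{\binom{k}{2}}$, which is the same $O(L^{k^2})$ you already noted naively. So the sharper bound in the lemma statement appears to be either a typo or an unproved claim; in any case it is immaterial, since the paper immediately upper-bounds every node type by $O((L^{k^2})^2\cdot 2^{k^2})$ when assembling Theorem~\ref{thm:main_theorem}. You should not invest further effort trying to reach $O(L^k)$.
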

\begin{proof}
Let $x$ be the vertex with the property $x \in X\setminus Y$. The key property is that the vertex $x$ is an isolated vertex in $G_X$ and thus we can set $Tab_X[\ba] := Tab_Y[\ba|_Y],$ because $x$ is arbitrarily far from any vertex in $G_Y$, especially from the set $Y$.
\end{proof}

\subsection{Proofs of Theorems}
We use Lemmas~\ref{thm:leafnodes},~\ref{thm:join_lemma},~\ref{thm:forget_lemma} and~\ref{thm:introduce_lemma} to prove the theorem about computing $L$-bounded $(s,t)$-cut in graph of bounded tree-width. 
For this, note that we can use $k = O(tw(G))$ and put it into all the Lemmas as it is an upper bound on the size of any node in the decomposition.% (the $O$ notation is because $+1$ in the definition and possible $+2$ for producing of node with both $s$ and $t$ in it).

We compute all the $L$-bounded length-constraint that satisfy the triangle inequalities in advance. This takes additional time $O(L^{k^2}\cdot k^3)$ which can be upper-bounded by $O((L^{k^2})^2\cdot 2^k)$ for $k \ge 2$ and $L\ge 2$ and so this does not make the overall time complexity worse.

\begin{proof}[Proof of Theorem~\ref{thm:main_theorem}]
As there are $O(n)$ nodes in nice tree decomposition (by Proposition~\ref{thm:nodes_prop}) and as we can upper-bound time needed to compute any type of node by $O((L^{k^2})^2\cdot 2^{k^2})$, we have complexity proposed in state of the Theorem~\ref{thm:main_theorem}.
\end{proof}

Let us now point out that the value of the parameter $L$ can be upper-bounded by the number of vertices $n$ of the input graph $G$ (in fact by $n^{1-\varepsilon}$ as it is proved in~\cite{kolman}).

%So the Corollary~\ref{thm:xp-corollary} gives us an $\XP$ algorithm and thus saying (as the Theorem~\ref{thm:main_theorem}) that for all the graphs with bounded tree-width, the {\sc MLBC} problem is solvable in polynomial time. However, if more of the structure is used one can obtain much better polynomial time -- our approach gives an $O(n^{13})$ time algorithm for series-parallel graph, while there is an $O(n^7)$ algorithm (as proven in~\cite{knop-diplo}).

We now want to sum-up the key ideas leading to Theorem~\ref{thm:main_theorem}. First, it is the use of dynamic program for computing all options of cuts for bounded number of possible choices and for second it is the idea of creating a node that includes both the source and the sink while not harming the tree-width too much. On the other hand, we can use this idea to solve also the generalized version of the problem -- length-bounded multi-cut -- with the additional parameter the number of terminals. It is easy to see that in this setting that again it is possible to achieve node containing every terminal and thus this yields following Theorem~\ref{thm:triangle-multiterminal}.

\begin{theorem}\label{thm:triangle-multiterminal}
Let $G=(V,E)$ be a graph of tree-width $k$, $S\subseteq V$ with $|S| = t$ and let $p:= t+k$.
Then for any $L\in\N$ and any $L$-limited length-constraints $\bb$ satisfying the triangle inequalities on $S$ an minimum $\bb$-bounded multi-cut can be computed in time $O((L^{p})^2\cdot 2^{p^2}\cdot n)$.
\end{theorem}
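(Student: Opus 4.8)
The plan is to run exactly the dynamic program behind Theorem~\ref{thm:main_theorem}, with the single change that instead of forcing $s$ and $t$ into one decomposition node we force the whole terminal set $S$ into one node. Concretely, I would start from a tree decomposition $\mathcal{T}=(\{B_i\colon i\in I\},T)$ of $G$ of width $k$ with $O(n)$ nodes. For each terminal $s\in S$ fix a node $B_{i(s)}$ with $s\in B_{i(s)}$, let $T'$ be the minimal subtree of $T$ spanning $\{B_{i(s)}\colon s\in S\}$, and add every vertex of $S$ to every bag $B_i$ with $i\in V(T')$. Conditions~1 and~2 of a tree decomposition are untouched, and condition~3 still holds: for each $s\in S$ the nodes containing $s$ now form the union of its original (connected) subtree with the connected subtree $T'$, and these two overlap at $B_{i(s)}$, so the union is connected. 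The width grows by at most $|S|=t$, hence stays at most $k+t=p$ up to the fixed additive constants incurred later by the nice-decomposition surgery — the same constants as in the two-terminal argument. Rooting at any node of $T'$ yields a root bag containing all of $S$, and by appending a short chain of forget nodes above the root I may assume the root bag equals $S$ exactly.

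Next I would make the decomposition nice and apply the introduce-node modification of Section~\ref{sec:preliminaries}, so that the auxiliary graphs $G_X$ and the tables $Tab_X$ are defined verbatim as before, now over bags of size $O(p)$, with $Tab_X[\ba]$ the size of a minimum $\ba$-bounded multi-cut of $G_X$ for terminal set $X$, over $L$-limited length vectors $\ba$ on $X$ satisfying the triangle inequalities. The four node types are then processed, with no change of proof, by Lemmas~\ref{thm:leafnodes},~\ref{thm:join_lemma},~\ref{thm:forget_lemma} and~\ref{thm:introduce_lemma}, reading $k$ as the new bound $O(p)$ on bag size; in particular the join step applies Lemma~\ref{thm:edge-disjoint-MLBMC}, which is already stated for an arbitrary terminal set. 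A bottom-up pass fills all tables; since $G_{\text{root}}=G$ and the root bag is $S$, the entry $Tab_{\text{root}}[\bb]$ is precisely the size of a minimum $\bb$-bounded $S$-multi-cut (this is where the hypothesis that $\bb$ satisfies the triangle inequalities is used, so that this entry is one the algorithm maintains), and the cut itself is recovered by back-tracking through the recurrences.

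For the running time, a bag of size $O(p)$ carries $O(L^{p^2})$ relevant length vectors, the costliest node being a forget node at $O((L^{p^2})^2)$ and a leaf node contributing the $O(2^{p^2})$ factor; multiplying by the $O(n)$ nodes of the decomposition gives the bound stated in the theorem.

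The step I expect to be the real obstacle is making the decomposition surgery precise: checking that adding $S$ to the spanning subtree $T'$ and then performing the nice-decomposition and introduce-node transformations keeps the width within $p$ plus a fixed constant and keeps the number of nodes linear. Once that is nailed down, the rest of the argument is a line-by-line reuse of the proof of Theorem~\ref{thm:main_theorem}, with $k$ replaced by $p$; a minor additional point worth spelling out is that the triangle-inequality bookkeeping inside the forget-node lemma is exactly what guarantees the root entry indexed by $\bb$ is well defined.
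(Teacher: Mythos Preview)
Your proposal is correct and matches the paper's approach exactly: the paper's ``proof'' of Theorem~\ref{thm:triangle-multiterminal} is a single sentence before the statement, noting that the key idea of Theorem~\ref{thm:main_theorem}---forcing the terminals into one bag and then running the same dynamic program over the node lemmas---carries over verbatim once every terminal of $S$ is added to the decomposition, raising the width by at most $t$. Your write-up simply spells out the details of that surgery (spanning subtree $T'$, verification of condition~3, appending forget nodes so the root bag equals $S$) more carefully than the paper does; the proof block that follows the statement in the paper is not a proof of Theorem~\ref{thm:triangle-multiterminal} at all but rather the deduction of Theorem~\ref{thm:multiterminal} from it.
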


\begin{proof}
To see that Theorem~\ref{thm:triangle-multiterminal} implies Theorem~\ref{thm:multiterminal} note, that given any $L$-limited length-constraints $\ba$ the minimum $\ba$-bounded multi-cut must satisfy triangle inequalities. 
Thus, we find the minimum $\bb$-bounded multi-cut among all $\bb \succeq \ba$. Note that we can do this in additional time $O(L^{p^2})$, but this does not make the total running time worse.

\end{proof}
%To see that Theorem~\ref{thm:triangle-multiterminal} implies Theorem~\ref{thm:multiterminal} note, that given any $L$-limited length-constraints $\ba$ the minimum $\ba$-bounded multi-cut must satisfy triangle inequalities. Thus we find the minimum $\bb$-bounded multi-cut among all $\bb \succeq \ba$. Note that we can do this in additional time $O(L^{p^2})$, but this does not make the total running time worse.

\section{Hardness of the $L$-bounded cut}
\label{sec:hardness}
In this section we prove MLBC parametrized by path-width is $\W[1]$-hard~\cite{Flum} by $\FPT$-reduction from $k$-{\sc Multicolor Clique}.
%\vskip 0.1cm

  \paramproblem{$k$-{\sc Multicolor Clique}}
	{$k$-partite graph $G=(V_1 \disjcup V_2 \disjcup \dots \disjcup V_k,E)$, where $V_i$ is independent set for every $i$ and they are pairwise disjoint}
	{$k$}
	{find a clique of the size $k$}
%\vskip 0.5cm

\paragraph{Denoting} In this section, sets $V_1, \dots, V_k$ are always partites of the $k$-partite graph $G$.  
We denote edges between $V_i$ and $V_j$ by $E_{ij}$. 
The problem is $\W[1]$-hard even if every independent set $V_i$ has the same size and the number of edges between every $V_i$ and $V_j$ is the same. 
In whole Section~\ref{sec:hardness} we denote the size of an arbitrary $V_i$ by $N$ and the size of an arbitrary $E_{ij}$ by $M$. 
For an $\FPT$-reduction from $k$-{\sc Multicolor Clique} to MLBC we need:
\begin{enumerate}
\item Create an MLBC instance $G' = (V', E'), s, t, L$ from the $k$-{\sc Multicolor Clique} instance $G = (V_1 \disjcup V_2 \disjcup \dots \disjcup V_k,E)$ where the size of $G'$ is polynomial in the size of $G$.
\item Prove that $G$ contains a $k$-clique if and only if $G'$ contains an $L$-bounded cut of the size $f(k,N,M)$ where $f$ is a polynomial.
\item Prove the path-width of $H$ is smaller than $g(k)$ where $g$ is a computable function.
\end{enumerate}

Our ideas were inspired by work of Michael Dom et al.~\cite{cvc-hardness}. They proved $\W[1]$ hardness of {\sc Capacitated Vertex Cover} and {\sc Capacitated Dominating Set} parametrized by the tree width of the input graph. 
We remarked that their reduction also proves $\W[1]$ hardness of these problems parametrized by path-width. 

\subsection{Basic gadget}
In the $k$-{\sc Multicolor Clique} problem we need to select exactly one vertex from each independent set $V_i$ and exactly one edge from each $E_{ij}$. 
Moreover, we have to make certain that if $e \in E_{ij}$ is the selected edge and $u \in V_i, v \in V_j$ are the selected vertices then $e = \{u,v\}$. 
The idea of the reduction is to have a basic gadget for every vertex and edge. 
We connect gadgets $g_v$ for every $v$ in $V_i$ into a path $P_i$. 
The path $P_i$ is cut in the gadget $g_v$ if and only if the vertex $v \in V_i$ is selected into the clique. 
The same idea will be used for selecting the edges. 

\begin{definition}
Let $h, Q \in \N$. Butte $B(s',t', h, Q)$ is a graph which contains $h$ paths of length $2$ and $Q$ paths of length $h + 2$ between the vertices $s'$ and $t'$. 
The short paths (of length $2$) are called shortcuts, the long paths are called ridgeways and the parameter $h$ is called height.
\end{definition}
A butte for $h = 3, Q = 4$ is shown in Figure~\ref{fig:butte} part a.
In our reduction all buttes will have the same parameter $Q$ (it will be computed later). 
For simplicity we depict buttes as a dash dotted line triangles with its height $h$ inside (see Figure~\ref{fig:butte} part b), or only as triangles without the height if it is not important. 

\begin{figure}
\begin{center}
\includegraphics[scale=1]{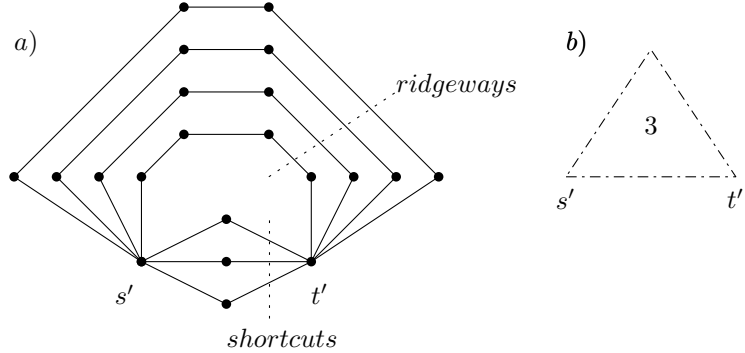}
\caption{a) Example of a butte for $h = 3$ and $Q = 4$. b) Simply diagram for a butte of height $3$. }
\label{fig:butte}
\end{center}
\end{figure}

Let $B(s',t',h,Q)$ be a butte. 
We denote by $s(B), t(B), h(B), Q(B)$ the parameters of butte $B$ $s',t',h$ and $Q$, respectively. 
We state easy but important observation about the butte path-width:% (the proof is in Appendix):

\begin{observation}
\label{obs:pathwidth}
Path-width of an arbitrary butte $B$ is at most 3.
\end{observation}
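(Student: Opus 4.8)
The plan is to exhibit an explicit path decomposition of a butte $B = B(s',t',h,Q)$ of width at most $3$, i.e.\ every bag has at most $4$ vertices. The key structural fact is that $B$ consists of many internally vertex-disjoint $s'$--$t'$ paths glued only at the two endpoints $s'$ and $t'$: the $h$ shortcuts of length $2$ and the $Q$ ridgeways of length $h+2$. So $B$ is a "generalized theta graph", and such graphs are well known to have path-width (in fact tree-width) at most $3$ because $\{s',t'\}$ is a size-$2$ separator isolating each internal path.

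First I would keep $s'$ and $t'$ in every bag of the decomposition; this already uses up two of the four allowed slots. Then I would process the internal paths one at a time. For a shortcut $s' - m - t'$ (where $m$ is its unique internal vertex) I use a single bag $\{s', t', m\}$. For a ridgeway $s' - u_1 - u_2 - \cdots - u_{h+1} - t'$ I use the sequence of bags $\{s',t',u_1,u_2\}, \{s',t',u_2,u_3\}, \dots, \{s',t',u_{h},u_{h+1}\}$, a caterpillar-style sliding window along that path. I then concatenate these blocks of bags (over all shortcuts and all ridgeways) into one long path, in any order.

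Next I would verify the three axioms of a (path) decomposition. Coverage of vertices: $s',t'$ appear everywhere, and each internal vertex of each path appears in the bags of its own block. Coverage of edges: each shortcut edge $s'm$ and $mt'$ lies in the bag $\{s',t',m\}$; each ridgeway edge $s'u_1$ lies in the first bag of its block, $t'u_{h+1}$ in the last, and each $u_iu_{i+1}$ in the bag $\{s',t',u_i,u_{i+1}\}$; there are no other edges since the paths are internally disjoint. Connectivity (the subtree/subpath condition): $s'$ and $t'$ occur in the contiguous set of all bags; every internal vertex occurs only within its own block, and within a ridgeway block $u_i$ occurs exactly in the two consecutive bags $\{s',t',u_{i-1},u_i\}$ and $\{s',t',u_i,u_{i+1}\}$, which are consecutive in the path. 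Finally each bag has size $3$ or $4$, so the width is at most $3$, establishing $\pw(B)\le 3$.

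I do not expect any genuine obstacle here — the statement is, as the authors say, an "easy but important observation". The only point requiring a little care is the bookkeeping for the connectivity axiom at the junctions where one path's block ends and the next begins: one must check that no internal vertex is reintroduced later, which is immediate because distinct paths share no internal vertices, so once a block is finished that path's internal vertices never reappear. If one wanted $\pw(B)\le 3$ on the nose rather than via "a decomposition of width $3$", nothing more is needed since width $3$ means maximum bag size $4$, which is exactly what the construction gives.
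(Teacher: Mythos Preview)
Your argument is correct and is essentially the paper's own proof written out explicitly: the paper simply observes that deleting $s(B)$ and $t(B)$ leaves a disjoint union of paths and isolated vertices (path-width $1$), and then adds $s(B),t(B)$ back into every bag to obtain width $3$. Your sliding-window construction is exactly the path decomposition this produces, just spelled out bag by bag.
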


\begin{proof}
If we remove vertices $s(B)$ and $t(B)$ from $B$ we get $Q(B)$ paths from ridgeways and $h(B)$ isolated vertices from shortcuts. 
This graph has certainly path-width 1. 
If we add $s(B)$ and $t(B)$ to every node of the path decomposition we get a proper path decomposition of $B$ with width 3.
\end{proof}

Let butte $B(s',t',h,Q)$ be a subgraph of a graph $G$. Let $u,v$ be vertices of $G$ and all paths between $u$ and $v$ goes through $B$ such that they enter into $B$ in $s'$ and leave it in $t'$ (see Figure~\ref{fig:pathbutte}). 
The important properties of the butte $B$ are:
\begin{enumerate}
\item By removing one edge from all $h$ shortcuts of $B$, we extend the the distance between $u$ and $v$ by $h$. If we cut all shortcuts of butte $B$ we say the butte $B$ is ridged. 
\item Let size of the cut is bounded by $K \in \N$ and we can remove edges only from $B$. If we increase $Q$ to be bigger than $K$ then any path between $u$ and $v$ cannot be cut by removing edges from $B$ (only extended by ridging the butte $B$).
%\item Butte $B$ has constant path-width independent on $h$ and $Q$.
\end{enumerate}

\begin{figure}
\begin{center}
\includegraphics[scale=1]{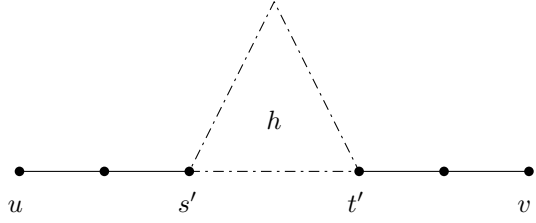}
\caption{Example of a path going through a butte.}
\label{fig:pathbutte}
\end{center}
\end{figure}

\subsection{Butte path}
In this section we define how we connect buttes into a path, which we call highland. The main idea is to have highland for every pair $(i, j), i \neq j \in [k]$. 
In the highland for $(i, j)$, there are buttes for every vertex $v \in V_i$ and every edge $e \in E_{i,j}$. We connect vertex buttes and edge buttes into a path. Then we set the butte heights and limit the size of the cut in such way that:
\begin{enumerate}
\item Exactly one vertex butte and exactly one edge butte have to be ridged.
\item If a butte for a vertex $v$ is ridged, then only buttes for edges incident with $v$ can be ridged.
\end{enumerate}
Formal description of the highland is in the following definition.
\begin{definition}
The highland $H(X, Y, s, t)$ is a graph containing 2 vertices $s$ and $t$ and $Z = X + Y$ buttes $B_1, \dots , B_Z$ where:
\begin{enumerate}
\item $s = s(B_1), t = t(B_Z)$ and $t(B_i) = s(B_{i + 1})$ for every $1 \leq i < Z$.
\item $h(B_i) = X^2 + i$ for $1 \leq i \leq X$.
\item $h(B_i) \in \{X^4, \dots, X^4 + X - 1\}$ for $X + 1 \leq i \leq Z$.
\item $Q(B_i) = X^4 + X^2$ for every $i$.
\end{enumerate}
\end{definition}
Let $H(X,Y,s,t)$ be a highland. We call buttes $B_1, \dots, B_X$ from $H$ low and buttes $B_{X + 1}, \dots, B_{X + Y}$ high (low buttes will be used for the vertices and high buttes for the edges). 
The vertex $t(B_X) = s(B_{X+1})$, where low and high buttes meet, is called the center of highland $H$. Note that there can be more buttes with the same height among high buttes and they are not ordered by height as the low buttes. 
An example of a highland is shown in Figure~\ref{fig:highland}.

\begin{figure}
\begin{center}
\includegraphics[scale=1]{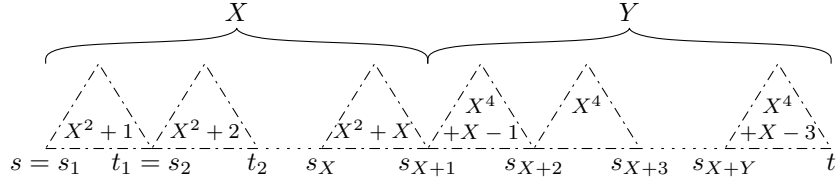}
\caption{Example of a highland $H(X,Y,s,t)$.}
\label{fig:highland}
\end{center}
\end{figure}

\begin{proposition}
\label{prp:highland}
Let $H(X,Y,s,t)$ be a highland. Let $L = 2(X + Y) + X^4 + X^2 + X - 1$. Let $C$ be the $L$-cut of size $X^4 + X^2 + X$, which cut all paths of length $L$ and shorter between $s$ and $t$ then:
\begin{enumerate}
\item The cut $C$ contains only edges obtained by ridging the exactly two buttes $B_i, B_j$, such that $B_i$ is low and $B_j$ is high.
\item Let $B_i$ be the ridged low butte and $B_{j}$ be the ridged high butte. Then, $h(B_j) = X^4 + X - i$.
\end{enumerate} 
\end{proposition}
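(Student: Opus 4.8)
The plan is to reduce the whole statement to a single budget‑accounting argument over the buttes of $H$. First I would fix notation for the gluing vertices: write $v_0 = s$, $v_Z = t$ and $v_i = t(B_i) = s(B_{i+1})$, so that $v_i$ is a cut vertex of $H$ separating $B_1,\dots,B_i$ from $B_{i+1},\dots,B_Z$. Consequently the $s$–$t$ distance in $H\setminus C$ equals $\sum_{i=1}^{Z} d_i$, where $d_i$ is the $s(B_i)$–$t(B_i)$ distance inside $B_i\setminus C$ (a shortest $s$–$t$ path must visit $v_0,v_1,\dots,v_Z$ in order, and each of its segments stays inside one butte). Inside a butte the only $s(B_i)$–$t(B_i)$ paths are the $h(B_i)$ shortcuts (length $2$) and the $Q(B_i)$ ridgeways (length $h(B_i)+2$), and they are pairwise internally disjoint; hence $d_i\in\{2,\ h(B_i)+2,\ \infty\}$, with $d_i=2$ unless every shortcut contains a cut edge, and $d_i=\infty$ only if moreover every ridgeway contains a cut edge. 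Let $R=\{\, i : d_i>2 \,\}$ be the set of ridged buttes.

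Next I would record two cost bounds coming from edge‑disjointness of these paths: to ridge $B_i$ one must delete at least $h(B_i)$ edges of $C$ lying in $B_i$, and to make $d_i=\infty$ one must delete at least $h(B_i)+Q(B_i)$ edges of $C$ lying in $B_i$. Since the buttes are pairwise edge‑disjoint and cover $E(H)$, summing the first bound gives $\sum_{i\in R} h(B_i)\le \sum_i |C\cap E(B_i)| = |C| = X^4+X^2+X$. The second bound, evaluated at the cheapest butte (a low one, $h(B_i)\ge X^2+1$, with $Q(B_i)=X^4+X^2$), is at least $X^4+2X^2+1 > X^4+X^2+X$ (because $X^2+1>X$), so no butte is disconnected and $d_i=h(B_i)+2$ for every $i\in R$. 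Now the hypothesis that $C$ destroys all $s$–$t$ paths of length at most $L$ says $2Z+\sum_{i\in R}h(B_i)=\sum_i d_i\ge L+1 = 2Z + X^4+X^2+X$, i.e. $\sum_{i\in R}h(B_i)\ge X^4+X^2+X$. Together with the previous inequality this forces equality throughout: $\sum_{i\in R}h(B_i)=|C|$, every edge of $C$ lies on a shortcut of a butte in $R$, and each butte in $R$ gets exactly $h(B_i)$ of them, one per shortcut. This is exactly the first assertion, once we also know $|R|=2$ with one low and one high butte.

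The remaining, purely arithmetic step is to determine which sub‑multisets of the butte heights sum to exactly $X^4+X^2+X$, using the separation of scales in the definition of a highland: low heights are $X^2+1,\dots,X^2+X$ and high heights lie in $\{X^4,\dots,X^4+X-1\}$. The sum of all low heights is $X^3+\tfrac{X(X+1)}{2}<X^4$, so $R$ must contain at least one high butte; two high buttes already sum to at least $2X^4>X^4+X^2+X$, so $R$ contains exactly one high butte $B_j$, say $h(B_j)=X^4+r$ with $0\le r\le X-1$. The remaining buttes of $R$ are low with heights summing to $X^2+X-r$; since $0<X^2+X-r\le X^2+X<(X^2+1)+(X^2+2)$, exactly one low butte $B_i$ is used, and $h(B_i)=X^2+X-r$, i.e. $i=X-r$. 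Rearranging, $h(B_j)=X^4+X-i$, which is the second assertion. All these numeric inequalities hold for $X\ge 2$; the case $X=1$ makes $k$‑{\sc Multicolor Clique} trivial and can be excluded.

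I expect the first two paragraphs to be where care is actually needed: one must argue cleanly that the $s$–$t$ distance is the sum of the intra‑butte distances and that ``partially'' cutting a butte buys nothing, so that the inequality $\sum_{i\in R}h(B_i)\le|C|$ is tight and leaves no room either for wasted edges or for a cheaper non‑ridging strategy. Once that rigidity is established, the scale‑separation arithmetic of the last paragraph is routine.
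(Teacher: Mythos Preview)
Your argument is correct and follows essentially the same budget-accounting strategy as the paper: rule out disconnecting any butte, then use the cost bound versus the required distance gain to pin down exactly one low and one high ridged butte with matching heights. Your write-up is in fact tighter than the paper's---you make the cut-vertex decomposition $\sum_i d_i$ explicit and squeeze the two inequalities $\sum_{i\in R}h(B_i)\le |C|$ and $\sum_{i\in R}h(B_i)\ge X^4+X^2+X$ to equality, which cleanly forces every edge of $C$ onto a shortcut, whereas the paper argues more informally that ``it is useless to add edges from ridgeways.''
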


\begin{proof}[Proof of Proposition~\ref{prp:highland}]
Every butte has at least $X + 1$ shortcuts and $X^4 + X^2$ ridgeways. Therefore, $C$ can not cut all paths in $H$ between $s$ and $t$ and it is useless to add edges from ridgeways to the cut $C$. Note that the shortest $st$-path in $H$ has the length $2(X + Y)$. 
\begin{enumerate}
\item If we ridge every low butte we extend the shortest $st$-path by $X^3 + \frac{X^2}{2} + \frac{X}{2}$. However, it is not enough and at least one high butte has to be ridged. Two high buttes cannot be ridged otherwise the cut would be bigger then the bound. No high butte can extend the shortest $st$-path enough, therefore at least one low butte has to be ridged. However, two low buttes and one high butte cannot be ridged because the cut $C$ would be bigger then the bound.
\item Let $F$ be the set of removed edges from ridged buttes $B_i$ and $B_j$. The height of $B_i$ is $X^2 + i$. Therefore, the length of the shortest $st$-path after ridging $B_i$ and $B_j$ and the size of $F$ is $2(X + Y) + X^2 + i + h(B_j)$. If $h(B_j) < X^4 + X - i$ then shortest $st$-path is strictly shorter then $2(X + Y) + X^4 + X^2 + X$. Thus, $F$ is not $L$-cut. If $h(B_j) > X^4 + X - i$ then and $|F| > X^4 + X^2 + X$ thus $F$ is bigger than $C$.
\end{enumerate}
\end{proof}

\subsection{Reduction}
In this section we present our reduction. Let $G=(V_1 \disjcup V_2 \disjcup \dots \disjcup V_k,E)$ be the input for $k$-{\sc Multicolor Clique}. As we stated in the last section, the main idea is to have a low butte $B_v$ for every vertex $v \in V(G)$ and a high butte $B_e$ for every edge $e \in E(G)$. Vertex $v$ and edge $e$ is selected into the $k$-clique if and only if the butte $B_v$ and the butte $B_e$ are ridged. 
From $G$ we construct {\sc MLBC} input $G', s, t, L$ (the construction is quite technical, for better understanding see Figure~\ref{fig:reduction}):
\begin{enumerate}
\item For every $1 \leq i,j \leq k, i \neq j$ we create highland $H^{i,j}(N,M,s,t)$ of buttes $B^{i,j}_1, \dots, B^{i,j}_{N + M}$.
\item Let $V_i = \{v_1, \dots, v_N\}$. 
The vertex $v_\ell \in V_i$ is represented by the low butte $B^{i,j}_\ell$ of the highland $H^{i,j}$ for every $j \neq i$. 
Thus, we have $k - 1$ copies of buttes (in different highlands) for every vertex. Hence, we need to be certain that only buttes representing the same vertex are ridged. Note that buttes representing the same vertex have the same height and the same distance from the vertex $s$.
\item Let $E_{ij} = \{ e_1, \dots, e_M\}, i < j$. 
Edge $e_\ell = \{u, v\} \in E_{ij} (u \in V_i, v \in V_j)$ is represented by the high butte $B^{i,j}_{N + \ell}$ of the highland $H^{i,j}$ and by the high butte $B^{j,i}_{N + \ell}$ of the highland $H^{j,i}$. Note that two buttes represented the same edge has same distance from the vertex $s$. 
Let $h_i, h_j$ be the heights of buttes representing the vertices $u$ and $v$, respectively. 
We set the buttes heights:
\begin{enumerate}
\item $h(B^{i,j}_{N + \ell}) = N^4 + N - h_i$
\item $h(B^{j,i}_{N + \ell}) = N^4 + N - h_j$
\end{enumerate}
\item We add edge $\bigl\{t(B^{i,j}_\ell), t(B^{i, j + 1}_\ell)\bigr\}$ for every $1 \leq i \leq k, 1 \leq j < k, i \neq j$ and $1 \leq \ell < N$.
\item We add paths of length $N - 1$ connected  $t(B^{i,j}_\ell)$ and $t(B^{j, i}_\ell)$ for every $1 \leq i,j \leq k, i \neq j$ and $1 \leq \ell < M$.
\item We set $L$ to $2(N + M) + N^4 + N^2 + N - 1$.
\end{enumerate}
We call paths between highlands in Items 4 and 5 the valley paths.

\begin{figure}
\begin{center}
\includegraphics[scale=1]{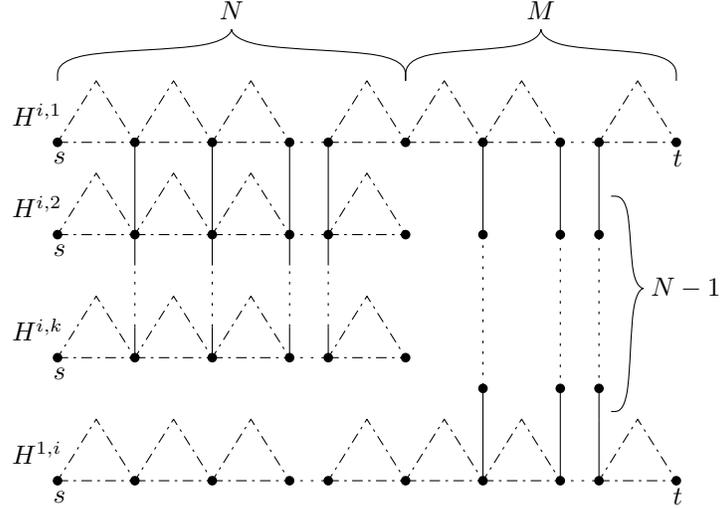}
\caption{Some part of the graph $G'$. All vertices labeled $s$ and $t$ are actually two vertices $s$ and $t$ in the graph $G'$. We divided them for better illustration. Highlands $H^{i,2}$ and $H^{i,k}$ have also high buttes, but we omitted them.}
\label{fig:reduction}
\end{center}
\end{figure}

\begin{observation}
\label{obs:polynom}
Graph $G'$ has a polynomial size in the size of the graph $G$.
\end{observation}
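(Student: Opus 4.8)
The plan is to track how each component of the construction contributes to the total size, showing that every quantity is bounded by a polynomial in the size of $G$. First I would observe that the $k$-\textsc{Multicolor Clique} instance $G$ has parameters $N = |V_i|$ and $M = |E_{ij}|$, both of which are at most $|V(G)|^2$, and $k \le |V(G)|$. The target length is $L = 2(N+M) + N^4 + N^2 + N - 1 = O(N^4) = O(|V(G)|^8)$, which is polynomial; this matters because the ridgeway paths inside each butte have length $h+2 \le L+2$ and therefore contribute polynomially many vertices each.

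Next I would count vertices highland by highland. There are $k(k-1) = O(|V(G)|^2)$ highlands $H^{i,j}(N,M,s,t)$, each containing $Z = N + M$ buttes. A single butte $B(s',t',h,Q)$ has $h$ shortcuts (each contributing one internal vertex) and $Q = N^4 + N^2$ ridgeways, each a path of length $h+2 \le L + 2$ and hence with at most $L+1$ internal vertices. So one butte has at most $h + Q \cdot (L+1) = O(N^4 \cdot N^4) = O(|V(G)|^{16})$ vertices, a polynomial. Multiplying by the $O(|V(G)|^2)$ buttes per highland and $O(|V(G)|^2)$ highlands still gives a polynomial. The valley paths from Items~4 and~5 of the reduction add only a constant number of vertices (Item~4) or $N-1$ vertices (Item~5) per relevant index triple, and there are at most polynomially many such triples, so these contribute a polynomial as well. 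Finally the two special vertices $s$ and $t$ add a constant. Summing all contributions, $|V(G')|$ is bounded by a fixed polynomial in $|V(G)|$, and since $G'$ is simple (or has bounded edge multiplicity), $|E(G')|$ is polynomial in $|V(G')|$, hence in $|V(G)|$.

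I do not expect any genuine obstacle here; the statement is a routine bookkeeping claim. The only point requiring a little care is that the ridgeway lengths, and therefore the per-butte vertex counts, depend on the butte heights $h$, which in turn grow like $N^4$; one must confirm that $N^4$ is genuinely polynomial in $|V(G)|$ (it is, since $N \le |V(G)|$) rather than, say, exponential. Once that is pinned down, the product of polynomially many polynomially sized pieces is polynomial, which establishes the observation.
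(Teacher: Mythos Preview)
Your counting argument is correct; every piece of $G'$ is bounded by a polynomial in $N$, $M$, and $k$, all of which are polynomial in $|V(G)|$, and the product of a bounded number of such polynomials is again polynomial. The paper itself gives no proof of this observation at all---it is stated and immediately followed by the next theorem---so your write-up is strictly more detailed than what the paper offers.
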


\begin{theorem}
\label{thm:correctness1}
If graph $G$ has a clique of size $k$ then $(G', s, t)$ has an $L$-cut of size $k(k - 1)(N^4 + N^2 + N)$.
\end{theorem}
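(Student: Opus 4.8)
The plan is to start from a $k$-clique in $G$ and construct an explicit $L$-cut $C$ in $G'$ by ridging exactly the right buttes, then verify both that $|C| = k(k-1)(N^4+N^2+N)$ and that $C$ is indeed an $L$-cut. Let $K = \{w_1,\dots,w_k\}$ be the clique, with $w_i \in V_i$, say $w_i = v_{\ell_i}$ in the enumeration of $V_i$; for each pair $i\neq j$ the edge $\{w_i,w_j\}$ is some $e_{m_{ij}} \in E_{ij}$ (with $m_{ij}=m_{ji}$). For every ordered pair $(i,j)$ with $i\neq j$, I would ridge in the highland $H^{i,j}$ exactly two buttes: the low butte $B^{i,j}_{\ell_i}$ representing $w_i$, and the high butte $B^{i,j}_{N+m_{ij}}$ representing the clique edge $\{w_i,w_j\}$. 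The cut $C$ consists of all the shortcut edges of these ridged buttes (one edge per shortcut).

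The size count is a direct computation: ridging a butte of height $h$ costs $h$ edges (one per shortcut), and each highland contributes one low butte of height $N^2+\ell_i$ and one high butte of height $N^4+N-h_i$ where $h_i = N^2+\ell_i$ is the height of the low butte — so the two heights sum to exactly $N^2+\ell_i + N^4 + N - (N^2+\ell_i) = N^4 + N$. Wait — that gives $N^4+N$ per highland, not $N^4+N^2+N$; so I need to recheck against Proposition~\ref{prp:highland}, which asserts a cut of size $X^4+X^2+X$ per highland with $X=N$. The discrepancy must be resolved by reading off from Proposition~\ref{prp:highland} that the correct pairing is $h(B_j) = N^4+N-i$ for the ridged low butte $B_i$ (index $i$, not height), so the two costs are $(N^2+i) + (N^4+N-i) = N^4+N^2+N$. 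Hence each of the $k(k-1)$ highlands contributes exactly $N^4+N^2+N$ edges, for a total of $k(k-1)(N^4+N^2+N)$; this is the main bookkeeping step and I would present it carefully, cross-checking the index-versus-height convention against the highland definition.

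Next I would verify that $C$ is an $L$-cut, i.e. every $s$–$t$ path in $G'\setminus C$ has length at least $L+1 = 2(N+M)+N^4+N^2+N$. A path from $s$ to $t$ either stays within a single highland $H^{i,j}$, or it uses valley paths to switch between highlands. Within a single highland, since the ridged low butte has index $i=\ell_i$ and the ridged high butte has height $N^4+N-\ell_i$, Proposition~\ref{prp:highland}(2) (the length-of-shortest-path computation $2(N+M)+N^2+i+h(B_j)$) shows that the shortest surviving path has length exactly $L$ — so it must be that the length is $\ge L+1$... here I must be careful with the off-by-one in the definition of $L$-cut ($L = 2(N+M)+N^4+N^2+N-1$, and an $L$-cut forbids paths of length $\le L$), and confirm that ridging the matched pair pushes the in-highland distance to $L+1$ exactly, not $L$. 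For paths that leave a highland through a valley path: the valley paths (length $N-1$) connecting $t(B^{i,j}_\ell)$ to $t(B^{i,j+1}_\ell)$ or to $t(B^{j,i}_\ell)$ force a path that detours through another highland to accumulate the full height of that highland's ridged buttes plus the valley-path lengths, and one checks this is never shorter than staying put. The key structural point, which I expect to be the main obstacle, is arguing that because the buttes ridged in $H^{i,j}$ and in $H^{i,j+1}$ both represent the same vertex $w_i$ (same height, same distance from $s$), and the edge buttes in $H^{i,j}$ and $H^{j,i}$ both represent the same clique edge, any valley detour lands the path on a "synchronized" position and gains nothing — so no mixed path beats $L+1$. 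I would organize this as a short case analysis on which valley paths the $s$–$t$ path uses, using the consistency of heights across highlands guaranteed by construction items 2 and 3, and invoking Observation~\ref{obs:pathwidth}-style local reasoning about buttes (a path entering a butte at $s(B)$ must leave at $t(B)$ via a surviving shortcut or ridgeway, contributing length $2$ or $h(B)+2$).

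Finally, I would note that Observation~\ref{obs:polynom} already gives that $G'$ has polynomial size, so the reduction is well-defined; the converse direction (an $L$-cut of this size yields a $k$-clique) is not part of this statement and would be handled separately.
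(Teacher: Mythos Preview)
Your proposal is correct and follows essentially the same approach as the paper: construct the cut by ridging, in each highland $H^{i,j}$, the low butte for $w_i$ and the high butte for the clique edge $\{w_i,w_j\}$; compute the size as $k(k-1)(N^4+N^2+N)$ via the index convention of Proposition~\ref{prp:highland} (you correctly spotted and resolved the index-versus-height ambiguity); and verify the $L$-cut property by treating in-highland paths and valley-path detours separately. Your off-by-one check is right: the in-highland shortest path after ridging has length exactly $L+1$.

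The one place where your sketch is slightly looser than the paper is the high-butte valley case. You write that the edge buttes in $H^{i,j}$ and $H^{j,i}$ represent the same clique edge and therefore a detour ``lands on a synchronized position and gains nothing''; but in fact those two buttes have \emph{different} heights ($N^4+N-\ell_i$ versus $N^4+N-\ell_j$), so a mixed path pairing the low butte of $H^{i,j}$ with the high butte of $H^{j,i}$ can have total ridged height strictly below $N^4+N^2+N$. The paper's argument here is that any such mixed path must traverse a high-butte valley path of length $N-1$, and since the shortfall in ridged height is at most $N-1$, the valley path exactly compensates. You already mention the $N-1$ length in passing, so your planned case analysis would presumably surface this, but it is the actual crux of that case rather than synchronization.
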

\begin{proof}
Let $G$ has a $k$-clique $\{v_1,\dots,v_k\}$ where $v_i \in V_i$ for every $i$ and $e_{ij} = \{v_i, v_j\} \in E_{ij}$. For every $i$ we ridge all $k-1$ buttes representing the vertex $v_i$ in $G'$. And for every $i < j$ we ridge both buttes representing the edge $e_{ij}$. 

We claim that set of removed edges from ridged buttes forms the $L$-cut. Let $H^{i,j}$ be an arbitrary highland. There is no $st$-path shorter than $L$ in $H^{i,j}$. Let $h(B_v) = N^2 + \ell$ where $B_v$ is arbitrary butte representing the vertex $v_i$. By construction of $G'$, the high butte representing the edge $e_{ij}$ in $H^{i,j}$ has height $N^4 + N - \ell$. Thus, ridged buttes in $H^{i,j}$ extend the shortest $st$-path by $N^4 + N^2 + N$ and it has length $2(M + N) + N^4 + N^2 + N$. Buttes representing the vertex $v_i$ have same height. Thus, a path through the low buttes of highlands using some valley path is always longer than path going through low buttes of only one highland. Therefore, it is useless to use valley paths among low buttes for the shortest $st$-path. 

Other situation is among high buttes because buttes representing the same edge have different heights. The butte $B_v$ representing the vertex $v_i$ extend the shortest path at least by $N^2 + 1$. The butte $B_e$ representing the edge $e_{i,j}$ extend the shortest at least by $N^4$. However, if $h(B_v) + h(B_e) < N^4 + N^2 + N$ then $B_v$ and $B_e$ have to be in different highlands. Therefore, the $st$-path going through $B_v$ and $B_e$ has to use a valley path between high buttes, which has length $N - 1$. Hence, any $st$-path has the length at least $2(N + M) + N^4 + N^2 + N$. 

We remove $N^4 + N^2 + N$ edges from each highland and there are $k(k-1)$ highlands in $G'$. Therefore, $G'$ has $L$-cut of the size $k(k-1)(N^4 + N^2 + N)$.
\end{proof}

\begin{theorem}
\label{thm:correctness2}
If $(G', s, t)$ has an $L$-cut of size $k(k-1)(N^4 + N^2 + N)$ then $G$ has a clique of size $k$.
\end{theorem}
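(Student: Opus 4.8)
The plan is to prove the converse direction: given an $L$-cut $C$ in $G'$ of size exactly $k(k-1)(N^4+N^2+N)$, recover a $k$-clique in $G$. The crucial counting observation is that there are $k(k-1)$ highlands $H^{i,j}$, the highlands are pairwise edge-disjoint except that they all share the vertices $s$ and $t$, and by Proposition~\ref{prp:highland} any $L$-cut must spend at least $N^4+N^2+N$ edges inside each highland. Since the total budget is exactly $k(k-1)(N^4+N^2+N)$, the cut $C$ restricted to each highland $H^{i,j}$ has size \emph{exactly} $N^4+N^2+N$, and no edges of $C$ lie on the valley paths (Items~4 and~5 of the construction) or elsewhere. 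Thus Proposition~\ref{prp:highland} applies to each highland individually: in $H^{i,j}$ exactly one low butte $B^{i,j}_{\ell(i,j)}$ and one high butte $B^{i,j}_{N+m(i,j)}$ are ridged, and their heights satisfy the relation $h(B^{i,j}_{N+m(i,j)}) = N^4 + N - \ell(i,j)$.

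First I would make the valley-path argument rigorous: because each butte has $Q = N^4+N^2 > k(k-1)(N^4+N^2+N)$ ... wait, that inequality is false, so instead I would argue that within the per-highland budget of $N^4+N^2+N$ one cannot afford to cut ridgeways (there are $N^4+N^2$ of them, more than the budget after ridging, since ridging two buttes already costs roughly $N^4+N^2$), hence property~2 of buttes forces $C\cap H^{i,j}$ to consist solely of shortcut edges; then Proposition~\ref{prp:highland} gives the structure. Next I would extract the combinatorial selection. For each ordered pair $(i,j)$ define the selected vertex of $V_i$ in that highland as $v_{\ell(i,j)}$ and the selected edge of $E_{ij}$ as $e_{m(i,j)}$. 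I must show consistency across highlands, which is exactly what the valley paths enforce.

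The main work — and the main obstacle — is proving consistency using the valley paths. For a fixed $i$, the valley paths from Item~4 connect $t(B^{i,j}_\ell)$ to $t(B^{i,j+1}_\ell)$ for all $j$ and all $\ell < N$; these have length $0$ (they are single edges). I would argue that if the highlands $H^{i,j}$ and $H^{i,j'}$ ridged low buttes at \emph{different} indices $\ell \ne \ell'$, then one could route a short $st$-path that enters $H^{i,j}$, travels along low buttes up to level $\min(\ell,\ell')-1$ or so, hops via the valley edge to the chain of highland $H^{i,j'}$, and bypasses the ridged butte, obtaining a path shorter than $L$ — contradiction. This needs the precise height arithmetic $h(B_\ell) = N^2+\ell$ versus $h(B_{\ell'}) = N^2+\ell'$ and the fact that the valley edges cost nothing, so the ``detour'' is only advantageous to the adversary; hence all $H^{i,j}$ with the same $i$ must ridge the low butte at the same index, call it $\ell_i$. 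Symmetrically, using the length-$(N-1)$ valley paths of Item~5 between $t(B^{i,j}_\ell)$ and $t(B^{j,i}_\ell)$ among the high buttes, I would show $H^{i,j}$ and $H^{j,i}$ ridge high buttes representing the same edge $e_{ij}\in E_{ij}$, and by the height relations $h(B^{i,j}_{N+m}) = N^4+N-\ell_i$ and $h(B^{j,i}_{N+m}) = N^4+N-\ell_j$ together with construction Item~3, the endpoints of $e_{ij}$ are exactly $v_{\ell_i}\in V_i$ and $v_{\ell_j}\in V_j$.

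Finally I would assemble the conclusion: the set $\{v_{\ell_1},\dots,v_{\ell_k}\}$ with $v_{\ell_i}\in V_i$ is a clique, because for every pair $i<j$ the edge $e_{ij}$ selected in $H^{i,j}$ has endpoints $v_{\ell_i}$ and $v_{\ell_j}$, so $\{v_{\ell_i},v_{\ell_j}\}\in E(G)$. This gives a $k$-clique and completes the proof of Theorem~\ref{thm:correctness2}; combined with Theorem~\ref{thm:correctness1} and Observation~\ref{obs:polynom}, and the path-width bound (which follows from Observation~\ref{obs:pathwidth} plus the fact that the highland/valley structure has bounded ``interaction width'' depending only on $k$), this establishes Theorem~\ref{thm:hardness}. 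I expect the delicate point to be bounding the path-length of the adversarial detours precisely enough that the strict inequality ``shorter than $L$'' holds, so I would set up the arithmetic carefully, isolating a clean sub-claim of the form ``any $st$-path using at least one valley edge among low buttes of the $i$-th group has length $> L$ unless all low-butte selections in that group coincide.''
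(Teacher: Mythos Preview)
Your proposal is correct and follows essentially the same approach as the paper: the exact budget of $k(k-1)(N^4+N^2+N)$ distributed over $k(k-1)$ edge-disjoint highlands forces Proposition~\ref{prp:highland} to apply in each, and then the valley paths enforce consistency of the ridged low (respectively high) buttes across highlands, after which the height relation from Item~2 of the proposition together with construction Item~3 pins down the clique. The paper's consistency argument is exactly the detour you sketch---enter one highland, hop across a valley edge at the appropriate level, and continue in the neighbouring highland so as to avoid both ridged low buttes while passing through only one ridged high butte, which extends the path by at most $N^4+N-1<N^4+N^2+N-1$; note only that the Item~4 valley edges have length~$1$, not~$0$, but this does not affect the inequality.
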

\begin{proof}
Let $C$ be an $L$-cut of $G'$. Every shortest $st$-path going through every highland has to be extended by $N^4 + N^2 + N$. By Proposition~\ref{prp:highland} (Item 1), exactly one low butte and exactly one high butte of each highland has to be ridged. We remove $(N^4 + N^2 + N)$ from every highland in $G'$. Therefore, there can be only edges from ridged buttes in $C$. 

For fixed $i$, highlands $H^{i,j}$ are the highlands which low buttes represent vertices from $V_i$. We claim that ridged low buttes of $H^{i,1}, \dots, H^{i,k}$ represent the same vertex. Suppose for contradiction, there exists two low ridged buttes $B_\ell$ of $H^{i,\ell}$ and $B_m$ of $H^{i,m}$ which represent different vertex from $V_i$. Without loss of generality $H^{i,\ell}$ and $H^{i,m}$ are next to each other (i.e. $|\ell - m| = 1$) and the distance from $s$ to $s(B_\ell)$ is smaller than the distance from $s$ to $s(B_m)$. 
Let $B'_\ell$ be a butte of $H^{i,m}$ such that it has the same distance from $s$ as the butte $B_\ell$ (see Figure~\ref{fig:shortcutbutte}).
The path $s$--$t(B'_\ell)$--$t(B_\ell)$--$t$ does not go through any ridged low butte. Therefore, this path is shorter than $L$, which is contradiction. We can use the same argument to show that there are not two high ridged buttes of highland $H^{i,j}$ and $H^{j,i}$ which represent different edges from $E_{ij}$.

\begin{figure}
\begin{center}
\includegraphics[scale=1]{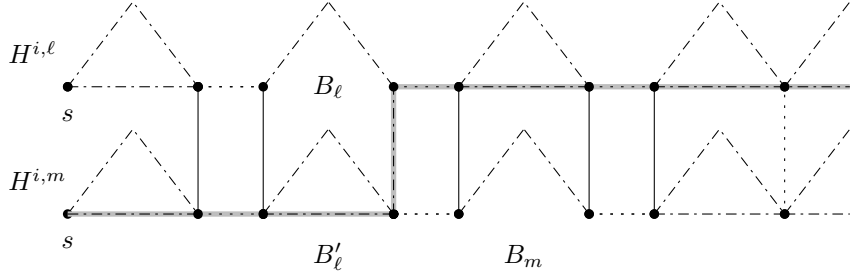}
\caption{How to miss every ridged low butte if there are ridged two low buttes representing two different vertices from one color class. Ridged butte is depicted as triangle without hypotenuse.}
\label{fig:shortcutbutte}
\end{center}
\end{figure}

We put into the $k$-clique $K \subset V(G)$ the vertex $v_i \in V_i$ if and only if an arbitrary butte representing the vertex $v_i$ is ridged. We proved in the previous paragraph that exactly one vertex from $V_i$ can be put into the clique $K$. Let $e_{ij} \in E_{ij}$ be an edge represented by ridged high buttes. We claim that $v_i \in e_{ij}$. Let $B \in H^{i,j}$ be a butte representing $v_i$ with height $N^2 + \ell$. Then by Proposition~\ref{prp:highland} (Item 2), butte $B' \in H^{i,j}$ of height $N^4 + N - \ell$ has to be ridged. By construction of $G'$, only buttes representing edges incident with $v_i$ have such height. Therefore, chosen edges are incident with chosen vertices and they form the $k$-clique of the graph $G$.
\end{proof}

\begin{observation}
\label{obs:parameter}
Graph $G'$ has path-width in $O(k^2)$.
\end{observation}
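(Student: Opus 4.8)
The plan is to build an explicit path decomposition of $G'$ of width $O(k^2)$ by sweeping through all $k(k-1)$ highlands simultaneously, in lockstep along the "index $\ell$" direction. First I would recall that by Observation~\ref{obs:pathwidth} each individual butte has path-width at most $3$, and more importantly that a butte together with its two endpoints $s(B),t(B)$ admits a path decomposition in which the endpoints sit in every bag; moreover the edge-gadgets connecting consecutive buttes in a highland are single vertices/short paths, so an entire highland $H^{i,j}$ has path-width bounded by a constant (in fact one can take a path decomposition of $H^{i,j}$ that keeps $s$, $t$, and the two "current" butte-endpoints $t(B^{i,j}_{\ell-1}), t(B^{i,j}_\ell)$ resident while processing butte $B^{i,j}_\ell$). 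The key structural observation is that the only edges \emph{between} different highlands are the valley paths from Items~4 and~5 of the reduction, and each valley path joins vertices $t(B^{i,j}_\ell)$ and $t(B^{i,j+1}_\ell)$ (resp. $t(B^{i,j}_\ell)$, $t(B^{j,i}_\ell)$) with the \emph{same} index $\ell$.

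The main step is then to process the graph in $\max(N,M)$ "rounds", where in round $\ell$ we deal with the $\ell$-th butte of every highland. Concretely, I would take the path decomposition to be a concatenation over $\ell = 1, 2, \dots$ of blocks, where the block for index $\ell$ is (roughly) the union of the path decompositions of the buttes $\{B^{i,j}_\ell : i\neq j\}$, into \emph{every} bag of which we insert the following "permanent" set of vertices: $s$, $t$, and for every pair $(i,j)$ the two endpoint vertices $t(B^{i,j}_{\ell-1})$ and $t(B^{i,j}_{\ell})$ (together with the at most constantly many internal vertices of the valley paths of index $\ell-1$ that bridge consecutive blocks). Since there are $k(k-1)$ highlands and each contributes $O(1)$ permanent vertices per round, this permanent set has size $O(k^2)$; adding the $O(1)$ "local" vertices of a single butte's decomposition keeps every bag of size $O(k^2)$. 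The three tree-decomposition axioms are checked as usual: every vertex and every edge of $G'$ lies inside some butte's decomposition or is a valley-path vertex/edge of a fixed index $\ell$, hence appears in some bag; and the connectivity axiom holds because each butte's vertices live in one contiguous block, each highland-endpoint vertex $t(B^{i,j}_\ell)$ is permanent exactly across the two consecutive blocks $\ell$ and $\ell+1$ where it is needed (as $t(B^{i,j}_\ell)=s(B^{i,j}_{\ell+1})$ and as an endpoint of a valley path), and valley-path internal vertices of index $\ell$ live in a single block.

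The step I expect to be the main obstacle is the bookkeeping at the block boundaries: a vertex of the form $t(B^{i,j}_\ell)$ plays three roles — it is the "top" endpoint of butte $B^{i,j}_\ell$, the "bottom" endpoint of butte $B^{i,j}_{\ell+1}$, and an endpoint of the valley paths of index $\ell$ linking it to $t(B^{i,j+1}_\ell)$ (and, symmetrically, to $t(B^{j,i}_\ell)$) — so I must make sure these three occurrences are arranged into a single contiguous interval of bags, which forces the valley paths of index $\ell$ to be handled in the \emph{same} block as the tops of the $\ell$-th buttes and the bottoms of the $(\ell+1)$-th buttes. This is a purely combinatorial scheduling argument and introduces only a constant factor, so the final width bound is $O(k^2)$ as claimed; none of it affects the counting, it only requires care in laying out the blocks. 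I would finish by noting that this, together with Observation~\ref{obs:polynom}, Theorem~\ref{thm:correctness1} and Theorem~\ref{thm:correctness2}, completes the $\FPT$-reduction and hence the proof of Theorem~\ref{thm:hardness}.
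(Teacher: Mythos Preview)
Your lockstep-sweep construction is sound in spirit and does yield path-width $O(k^2)$, but it is a genuinely different argument from the paper's. The paper proceeds more indirectly: it first contracts every butte to a single edge and every long valley path to a single edge, obtaining a graph $H$; then it deletes the set $U$ consisting of $s$, $t$, and all highland centers (so $|U|=O(k^2)$), observes that what remains is a disjoint union of $(k-1)\times(N-2)$ and $2\times(M-2)$ grids, each of path-width $O(k)$, and finally adds $U$ back into every bag and undoes the contractions (subdivision and replacing edges by constant-path-width buttes do not increase path-width by more than a constant). Your direct construction avoids the grid reduction at the cost of more bookkeeping; the paper's version is shorter because it offloads the sweep to the well-known path-width bound for thin grids.

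One point in your write-up does need fixing: the valley paths of Item~5 have length $N-1$, so they contain $N-2$ internal vertices, not ``at most constantly many''. You therefore cannot place all of them in the permanent set of a block without blowing the width up to depend on $N$. The repair is routine---process each such path's internal vertices one at a time inside the block (its two endpoints $t(B^{i,j}_\ell)$ and $t(B^{j,i}_\ell)$ are already permanent), which adds only $O(1)$ to any single bag---but as written the sentence is incorrect. Also note that valley paths join vertices with the \emph{same} index $\ell$, so they live entirely inside one block rather than ``bridging consecutive blocks''; what bridges blocks is only the shared butte endpoint $t(B^{i,j}_\ell)=s(B^{i,j}_{\ell+1})$, which you already handle correctly.
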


\begin{proof}
Let $H$ be a graph created from $G$ by replacing every butte by a single edge and contract the valley paths between high buttes into single edges, see Figure~\ref{fig:pathwidth} transformation $a$.  Let $U$ be a vertex set containing $s$, $t$ and every highland center. Let $H'$ be a graph created from $H$ by removing all vertex from $U$, see Figure~\ref{fig:pathwidth} transformation $b$. 

\begin{figure}
\begin{center}
\includegraphics[scale=1]{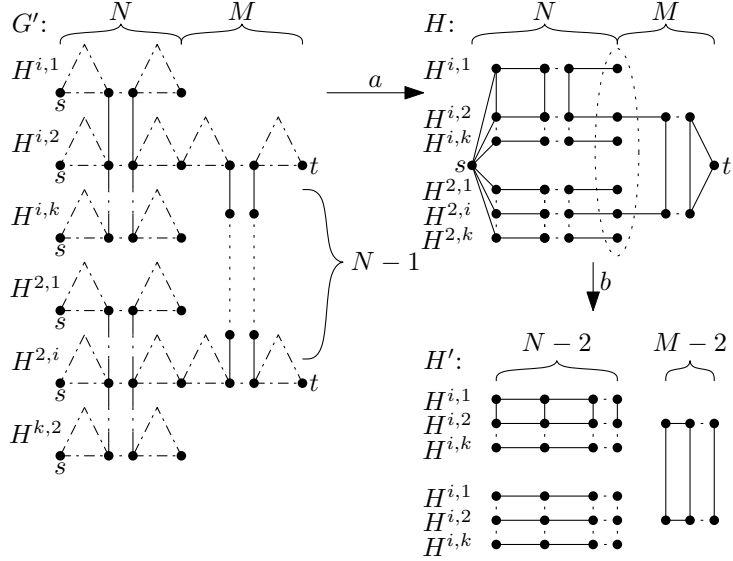}
\caption{The transformation $a$ replace all buttes in $G'$ by single edges and contract long valley paths into single edges. The transformation $b$ removes vertices $s$ and $t$ and all highland centers (highlighted by dotted ellipsis) from $H$. }
\label{fig:pathwidth}
\end{center}
\end{figure}

Graph $H'$ is unconnected and it contains $k$ grids of size $(k-1) \times (N - 2)$ and $\binom{k}{2}$ grids of size $2 \times (M - 2)$. 
Path-width of $(k - 1) \times (N - 2)$ grids is in $O(k)$, therefore $\pw(H') \in O(k)$. 
If we add set $U$ to every node of a path decomposition of $H'$ we get proper path decomposition of $H$. Since $|U| \in O(k^2)$, path-width of $H$ is in $O(k^2)$.
The edge subdivision does not increase path-width. Moreover, replacing edges by buttes does not increase it either (up to multiplication constant) because butte has the constant path-width (Observation~\ref{obs:pathwidth}). Therefore, $\pw(G) = c\pw(H)$ for some constant $c$ and $\pw(G) \in O(k^2)$. 

\end{proof}

And thus Theorem~\ref{thm:hardness} easily follows from Observations~\ref{obs:polynom} and~\ref{obs:parameter} and Theorems~\ref{thm:correctness1} and~\ref{thm:correctness2}.

\section{Polynomial kernel is questionable}
In this section, we prove that for the \MLBC problem it is unlikely to admit a polynomial kernel when parametrized by the length $L$ and the path-width (tree-width) of the input graph. We will prove this fact by the use of a AND-composition framework---that is by designing an AND-composition algorithm from the unparameterized gapped version of the \MLBC problem into itself.

\dproblem{Decision version of \MLBC}
{Graph $G = (V,E),$ vertices $s_1,s_2,$ positive integers $L,K$}
{Is there an $L$ length bounded cut consisting of exactly $K$ edges}

Baier et al.~\cite{kolman} proved that this problem is $\NP$-complete even if we restrict the input instances such that
\begin{itemize}
  \item the desired length $L \geq 4$ is constant,
  \item either there is an $L$ bounded cut of size exactly $K$ or every $L$ bounded cut has at least $1.13K$ edges.
\end{itemize}
Thus, it is possible to define the polynomial relation $\mathcal{R}$ as follows. We will consider instances $(G,s_1,s_2,L,K)$ of the decision problem with constant $L.$ Two instances $(G,s_1,s_2, L, K), (G',s_1',s_2',L,K')$ are equivalent if $|V(G)| = |V(G')|$ and $K = K'.$ It is clear that $\mathcal{R}$ is a polynomial equivalence relation.

\paragraph{AND-composition}
We will take graphs $G_1,G_2,\dots,G_t$ the input instances that are equivalent under the relation $\mathcal{R}$ for the \GMLBC problem with the constant $L= 4.$ As a composition we will take the disjoint union of graphs $G_1,G_2,\dots,G_t$ and unify sources and sinks of the resulting graph and denote the graph as $G.$
 
Now it is easy to see that the path-width of the graph $G$ of our construction is at most $\max_{i = 1,2,\dots, t} |G_i| + 2$ as we can form bags of the path decomposition by whole graphs $G_i$ add the source and the sink into every bag and connect them into a path. This is indeed a correct path decomposition as the only two vertices in common are in every bag.

As we have taken the parameter $L$ to be constant, this finishes arguments about parameters. And thus also the proof of Theorem~\ref{thm:refuteKernel}.

\section{Conclusions}

There is another standard generalization of the length bounded cut problem -- where we add to each edge also its length. If this length is integral it is possible to extend and use our techniques (we only subdivide edges longer than $1$ -- this doesn't raise the tree-width of the graph on the input). On the other hand, if we allow fractional numbers, it is uncertain how to deal with such a generalization.

\paragraph{Acknowledgments}
Authors thank to Jiří Fiala, Petr Kolman and Lukáš Folwarczný for fruitful discussions about the problem. We would like to mention that part of this research was done during Summer REU 2014 at Rutgers University.

\bibliographystyle{siam}
\bibliography{lbounded,monographs,denseGraphs}

\begin{thebibliography}{10}

\bibitem{koubek-adamek}
{\sc J.~Ad{\'a}mek and V.~Koubek}, {\em Remarks on flows in network with short
  paths}, Commentationes mathematicae Universitatis Carolinae, 12 (1971),
  pp.~661 -- 667.

\bibitem{kolman}
{\sc G.~Baier, T.~Erlebach, A.~Hall, E.~K\"{o}hler, P.~Kolman, O.~Pangr\'{a}c,
  H.~Schilling, and M.~Skutella}, {\em Length-bounded cuts and flows}, ACM
  Trans. Algorithms, 7 (2010), pp.~4:1--4:27.

\bibitem{courcelle}
{\sc B.~Courcelle}, {\em Graph rewriting: An algebraic and logic approach},
  Handbook of Theoretical Computer Science,  (1990), pp.~194--242.

\bibitem{CFKLMPPS-FPT}
{\sc M.~Cygan, F.~V. Fomin, L.~Kowalik, D.~Lokshtanov, D.~Marx, M.~Pilipczuk,
  M.~Pilipczuk, and S.~Saurabh}, {\em Parameterized Algorithms}, Springer,
  2015.

\bibitem{dahl}
{\sc G.~Dahl and L.~Gouveia}, {\em On the directed hop-constrained shortest
  path problem}, Operations Research Letters, 32 (2004), pp.~15 -- 22.

\bibitem{cvc-hardness}
{\sc M.~Dom, D.~Lokshtanov, S.~Saurabh, and Y.~Villanger}, {\em Capacitated
  domination and covering: A parameterized perspective}, in Parameterized and
  Exact Computation, M.~Grohe and R.~Niedermeier, eds., vol.~5018 of Lecture
  Notes in Computer Science, Springer Berlin Heidelberg, 2008, pp.~78--90.

\bibitem{Drucker12}
{\sc A.~Drucker}, {\em New limits to classical and quantum instance
  compression}, in 53rd Annual {IEEE} Symposium on Foundations of Computer
  Science, {FOCS} 2012, New Brunswick, NJ, USA, October 20-23, 2012, 2012,
  pp.~609--618.

\bibitem{floyd}
{\sc R.~W. Floyd}, {\em Algorithm 97: Shortest path}, Commun. ACM, 5 (1962),
  p.~345.

\bibitem{Flum}
{\sc J.~Flum and M.~Grohe}, {\em Parameterized Complexity Theory (Texts in
  Theoretical Computer Science. An EATCS Series)}, Springer-Verlag New York,
  Inc., Secaucus, NJ, USA, 2006.

\bibitem{ford-fulkerson}
{\sc L.~R. Ford and D.~R. Fulkerson}, {\em Maximal flow through a network},
  Canadian journal of mathematics, 8 (1956), pp.~399--404.

\bibitem{golovach}
{\sc P.~A. Golovach and D.~M. Thilikos}, {\em Paths of bounded length and their
  cuts: Parameterized complexity and algorithms}, Discrete Optimization, 8
  (2011), pp.~72 -- 86.

\bibitem{GJW:MixedChinese}
{\sc G.~Gutin, M.~Jones, and M.~Wahlström}, {\em Structural parameterizations
  of the mixed chinese postman problem}, in Algorithms – ESA 2015, N.~Bansal
  and I.~Finocchi, eds., vol.~9294 of Lecture Notes in Computer Science,
  Springer Berlin Heidelberg, 2015, pp.~668--679.

\bibitem{pesneau}
{\sc D.~Huygens, M.~Labb{\' e}, A.~R. Mahjoub, and P.~Pesneau}, {\em The
  two-edge connected hop-constrained network design problem: Valid inequalities
  and branch-and-cut}, Networks, 49 (2007), pp.~116--133.

\bibitem{itai}
{\sc A.~Itai, Y.~Perl, and Y.~Shiloach}, {\em The complexity of finding maximum
  disjoint paths with length constraints}, Networks, 12 (1982), pp.~277--286.

\bibitem{kloks}
{\sc T.~Kloks}, {\em Treewidth, Computations and Approximations (Lecture notes
  in computer science, 842)}, Springer-Verlag New York, Inc., 1994.

\bibitem{NdM:tree-depth}
{\sc J.~Nešetřil and P.~O. de~Mendez}, {\em Tree-depth, subgraph coloring and
  homomorphism bounds}, European Journal of Combinatorics, 27 (2006), pp.~1022
  -- 1041.

\bibitem{warshall}
{\sc S.~Warshall}, {\em A theorem on boolean matrices}, J. ACM, 9 (1962),
  pp.~11--12.

\end{thebibliography}

\end{document}